\documentclass[11pt]{article}
\usepackage{a4wide}
\usepackage{amsmath}
\usepackage{amssymb}
\usepackage{amsthm}
\usepackage{natbib}
\usepackage{color}
\usepackage{bm}
\usepackage{graphics,graphicx}
\usepackage{url}
\makeatletter

\newcommand{\para}{\bigskip\noindent}

\newtheorem{theorem}{Theorem}[section]

\setlength{\unitlength}{1 cm}

\newcommand{\be}{\begin{equation}}
\newcommand{\ee}{\end{equation}}
\newcommand{\TITLE}[1]{{\center{\bf#1}\vskip30pt}}

\newcommand{\AFFILIATION}[1]{{\small\center {\it #1} \\}}
\newcommand{\AUTHORS}[1] {{\small\center{#1}\\}}

\newcommand{\mcS}{\mathcal{S}}
\theoremstyle{plain}
\newtheorem{cor}[theorem]{Corrollary}

\begin{document}

\TITLE{Linked shrinkage to improve estimation of interaction effects in regression models}

\AUTHORS{Mark van de Wiel$^{1}$, Matteo Amestoy$^{1}$, Jeroen Hoogland$^{1}$}

\AFFILIATION{$^1$Dep Epidemiology and Data Science, Amsterdam University medical centers, Amsterdam }

\begin{abstract}
We address a classical problem in statistics: adding two-way interaction terms to a regression model. As the covariate dimension increases quadratically, we develop an estimator that adapts well to this increase, while providing accurate estimates and appropriate inference.
Existing strategies overcome the dimensionality problem by only allowing interactions between relevant main effects. Building on this philosophy, we implement a softer link between the two types of effects using a local shrinkage model. We empirically show that borrowing strength between the amount of shrinkage for main effects and their interactions can strongly improve estimation of the regression coefficients. Moreover, we evaluate the potential of the model for inference, which is notoriously hard for selection strategies. Large-scale cohort data are used to provide realistic illustrations and evaluations. Comparisons with other methods are provided.
The evaluation of variable importance is not trivial in regression models with many interaction terms. Therefore, we derive a new analytical formula for the Shapley value, which enables rapid assessment of individual-specific variable importance scores and their uncertainties. Finally, while not targeting for prediction, we do show that our models can be very competitive to a more advanced machine learner, like random forest, even for fairly large sample sizes. The implementation of our method in \texttt{RStan} is fairly straightforward, allowing for adjustments to specific needs.
\end{abstract}

\textbf{Keywords}: Regression, Interactions, Shrinkage, Variable importance, Shapley values


\section{Introduction}
Adding interactions to a regression model is a classical problem in statistics which may lead to interesting insights on the joint effects of covariates \cite[]{afshartous2011key}. It comes at a price though, as the number of interaction terms $q$ increases quadratically with the number of covariates $p$. While one may argue that in very small $p$ settings the plausibility of each two-way interaction may be considered separately, such a strategy is infeasible or unpractical for larger dimensions. At the other end of the spectrum,
with $p$ large - and hence $q$ very large - compared to sample size $n$, the hierarchical lasso \cite[]{bien2013lasso} and variations thereof provide a computationally efficient sparse solution. The latter, however, focuses on selection, and does not come naturally with parameter inference. This leaves a gap for the middle spectrum, with $p+q$ of a similar order of magnitude as $n$, a setting which is fairly common in many biomedical or epidemiological studies. Our goal is to fill this gap using an interpretable solution that on one hand is able to deal with the large number of parameters, while on the other hand allows for inference. More specifically, our aim is three-fold: 1) accurate estimation of parameters; 2) interpretation of variable importance scores in the context of our model; and 3) inference for those variable importance scores.

To achieve these goals, this study provides two novelties. First, a linked shrinkage model, which links local shrinkage of the interaction effects to that of the main effects. This extends the Bayesian local shrinkage framework \cite[]{gelman2008weakly}. The latter provides flexible, differential shrinkage of small and large effects, which may benefit the accuracy of the parameter estimation in the same spirit as the adaptive lasso and non-negative garrotte do. In addition, we draw upon its good inferential properties \cite[]{Wiel2023think}.
The linked shrinkage model also includes a global shrinkage parameter for the interaction parameters to allow those to be weaker on average than the main effect parameters, thereby providing adaptivity. Second, we deduce a computationally efficient equation for Shapley values \cite[]{aas2021explaining}, which allows quantification and inference for those sample specific variable importance scores. Shapley values are popular in machine learning, and we argue that these scores can also be of great use for regression models with many interaction terms, as the presence of the latter impedes straightforward interpretation of the regression coefficients as variable importance scores \cite[]{afshartous2011key}.


As our problem is a classical one in statistics, a number of solutions already exist. Below we provide a list of reference methods that we compare our method to. Here, the first three do not focus on selection, the others do.
First, simple ordinary least squares (\texttt{OLS}), which does not apply any shrinkage, and may therefore provide unstable estimates for some settings. Second, ridge regression with two tuned penalties \cite[]{wood2011fast}, one for main effects, one for interactions: \texttt{ridge2}. Such global penalties likely improve the predictive abilities of the model, but do not well accommodate strong differences between parameter strengths within each of the two parameter sets.
Third, Bayesian local shrinkage \cite[]{gelman2008weakly} using a local Gaussian prior for each parameter, the standard deviations of which are endowed with a half-Cauchy prior: \texttt{Bayloc}. \cite{gelman2008weakly} argue that appropriate standardization (e.g. -1/2, 1/2 for binaries) ``automatically applies more shrinkage to higher-order interactions''. This is true, as such standardization causes two-way interactions (products) to be on a smaller scale than main effects. This does not, however, link the shrinkage of an interaction to that of its corresponding main effects nor adapt it globally to the data. Nevertheless, this model is an important basis for ours.
Fourth, a two-step approach that only include interactions of significant main effects: \texttt{2step}. While popular in practice, it may render very unstable results, as the inclusion of interactions depends on a hard threshold for the main effects. Fifth,
a lasso regression with only a penalty for the interaction terms: \texttt{lassoint}. This type of global shrinkage may suffer from the same drawbacks as \texttt{ridge2}. And sixth, hierarchical lasso for interactions (\texttt{hlasso}), a state-of-the-art methodology that applies the same reasoning as \texttt{2step}, but formalizes it in one fitting procedure \cite[]{bien2013lasso, lim2015learning, du2021lasso}. It is mostly designed for computational efficiency to handle large $p$. While it has proven its use for variable (and parameter) selection, formal inference is far from straightforward \cite[]{lim2015learning}, requiring strong assumptions on the underlying sparsity or extensive resampling.

We compare our linked shrinkage model, termed \texttt{Bayint}, to those methods as well as to several variations of \texttt{Bayint} which differ in how they encode the linked shrinkage. For this, we use the OLS estimates of a very large data set as a benchmark. We study the results for two outcomes (systolic blood pressure and cholesterol), and a mix of continuous, binary and categorical covariates. In addition, we provide several illustrations to support interpretation of the model and the covariates, including those with Shapley values and their uncertainties. While we do not focus on prediction, we perform a short comparison with random forest. This illustrates that even for large sample sizes \texttt{Bayint} can be very competitive to such a machine learner in terms of out-of-bag predictive performance.
We end by discussing the implementation, scalability and potential extensions.

\section{Approach}
The model, called \texttt{Bayint}, combines ideas behind the hierarchical lasso, which considers interactions of strong main effects to be more important, with those of Bayesian local shrinkage, the hierarchical set-up of which allows a softer link between the interactions and main effects.

\subsection{The linked shrinkage model}
For simplicity, we assume linear response $Y_i, i=1, \ldots, n$, but it can easily be reformulated in a generalized linear model or Cox regression context. For sample $i$, the $j$th covariate is denoted by $x_{ij}, j =1, \ldots, p.$ Then, the proposed model is:
\begin{equation}
\begin{split}
Y_i &=\alpha + \sum_{j=1}^{p} \beta_j x_{ij} + \sum_{j,k: j \neq k} \beta_{jk} x_{ij}x_{ik} + \epsilon_i\\
\alpha &\sim N(0,10^2)\\
\beta_{j} &\sim N(0,\sigma^2\tau^2_{j})\\
\beta_{jk} &\sim N(0,\sigma^2\tau_{j}\tau_k\tau_{int})\\
\epsilon_i &\sim N(0,\sigma^2)\\
\tau_{j} &\sim C^{+}(0,1)\\
\tau_{int} &\sim U(0.01,1)\\
\sigma^2 &\sim IG(1,0.001)
\end{split}\label{bayint}
\end{equation}
Several of the components in \eqref{bayint} are in line with conventional Bayesian modelling, including
the half-Cauchy prior on the (relative) standard deviations $\tau_j$ \cite[]{gelman2008weakly}. We add linked shrinkage to the model by including the product $\tau_j\tau_k$ in the prior of $\beta_{jk}$. This product renders a symmetric handling of strong and weak main effects (corresponding to large $\tau_{j}$ and small $\tau_{j}$, respectively), whereas it is on the same scale as each of the components when they are. In addition, $\tau_{int}, 0.01 \leq \tau_{int} \leq 1$ is a shrinkage parameter shared by all interactions that models the prior believe that interaction parameters might, on average, be weaker than the main effect parameters. The lower bound avoids complete shrinkage to 0 of all interaction effects, as this may be undesirable in a sparse setting. Note that when categorical covariates are present, the summation over $j,k$ in the regression model in \eqref{bayint} is adjusted such that interactions between their levels are excluded.

\subsection{Alternative linked shrinkage models}
We discuss a few variations of \texttt{Bayint} \eqref{bayint} that may be relevant for other settings or foci. First, \texttt{Bay0int}, which does not apply shrinkage to the main main effects (non-informative Gaussian priors). This may be useful when one thinks of our model as (a simplification of) a general quadratic form, for which shrinkage of the main effects towards 0 is not necessarily logical. A potential disadvantage is that one looses the link between shrinkage of the two types of effects. Second, \texttt{Bayintadd}, which replaces $\tau_{j}\tau_k$ by $(\tau_j^2 + \tau_k^2)/2$, which lets the strongest main effect dominate the shrinkage of the interaction. That is, if any of the two main effects is strong, this leads to relatively little shrinkage (large prior variance) of $\beta_{jk}$.

If one is particularly interested in detecting interactions, a third alternative may be attractive: \texttt{Bayint*}, which sets $\tau_{int} = 1$. This model does usually not compete with \texttt{Bayint} in terms of prediction accuracy, as the latter has a global shrinkage parameter $\tau_{int}$ that can adapt to the interactions being weaker (on average) than the main effects for most problems. The downside of including $\tau_{int}$, though, is that relatively strong interactions may be over-shrunken, which is why
\texttt{Bayint*} may be better at detecting those. Comparisons with these alternative models are provided further on.


\section{Results}
Here, we assess model \eqref{bayint} in a broad sense by considering parameter estimation and inference, interval coverage, interpretation and prediction. We focus on a realistic data setting for which we may assume (nearly) true values to be known. When appropriate, performance is compared to that of several competitors. These are discussed in the Introduction; more details on their implementations are provided in the Supplementary Material.

\subsection{Data}
The main data we use throughout the manuscript is obtained from the Helius study \cite[]{helius2017}. We use this data set, because it reflects a fairly standard epidemiological study and contains a mix of binary, continuous and categorical covariates.
We consider both systolic blood pressure (log scale; SBP) and cholesterol as response, and
age, gender, ethnicity (5 levels; coded with 4 dummies), smoking (yes/no), packyears, coffee consumption (yes/no), BMI and 4 simulated standard normal noise variables as covariates, rendering $p=14$ covariates. All two-way interactions are considered, except those between the 4 dummy variables representing the categorical covariate, rendering $q = \binom{14}{2} - \binom{4}{2} = 91 - 6 = 85$ interaction parameters.

The entire data set, referred to as the `master set', consists of $N = 21,570$ samples. Therefore, OLS estimates based on the master set are very precise, and hence safely used as benchmarks. As a verification, we confirm that i) the estimated coefficients of the noise variables are indeed very close to zero; and ii) the coefficients estimated by (adaptive) lasso are very close to the OLS estimates (Suppl. Fig. \ref{olsvslassochol} and \ref{olsvslassosbp}).

Continuous covariates were centered and scaled, that is standardized. On the centering, we follow the advise by \cite{afshartous2011key}, as the centering (largely) removes collinearity between main effects and two-way interactions. Scaling is generally applied in shrinkage settings, and also helps to interpret the coefficients and estimation errors relative to one another. Binary covariates were (contrast) coded as -1, 1, which renders them standardized in the balanced setting. For interpretation, we prefer to use the same coding for all binaries. The categorical variable, ethnicity, was contrast-coded with levels -1,0,1.

\subsection{Parameter estimation}
We evaluate parameter estimation of any $\beta = \beta_j$ or $\beta=\beta_{jk}$ by the root Mean Squared Error (rMSE), defined as $$\text{rMSE} =
\sqrt{\frac{1}{B}\sum_{b=1}^B (\hat{\beta}^{(b)} - \beta)^2},$$ with $\hat{\beta}^{(b)}$ the estimator of $\beta$ for the $b$th training set. We used $B=25$ (nearly non-overlapping) training sets of size $n=1,000$, and set the `true' $\beta$ to the OLS estimate from the large master data set. For Bayesian methods, the posterior mean was used
as a point estimate for $\beta$. Figures \ref{rmsecholsplit} and \ref{rmsesbpsplit} compares the results of \texttt{Bayint} with other methods (see Introduction) for cholesterol and SBP as outcome, respectively. The bold line demarcates the main effects and interactions; the thin lines separate the strong effects from the weaker ones, as defined by significance in the master set ($p<0.01$).

Overall, we observe that \texttt{Bayint} shows good performance. The upper displays show that \texttt{Bayint} and \texttt{Bayloc} perform better than OLS across the entire range. \texttt{Bayint} is competitive to \texttt{Bayloc} for the strongest interactions, and superior for most other parameters. The latter situation reverses for the comparison with \texttt{ridge2}, which is competitive to \texttt{Bayint} for most parameters, but not for the important sub-group of strongest interactions, probably due to over-penalization by the global interaction penalty parameter. The latter conclusion is similar for the comparison with \texttt{lassoint}, although here the differences with \texttt{Bayint} are fairly small for SBP (Fig \ref{rmsesbpsplit}). For the latter, the hierarchical lasso, \texttt{hlasso}, is fairly competitive to \texttt{Bayint} on estimating interactions, but lags behind for estimating main effects. Here, we should note that \texttt{hlasso} is more tailored to covariate screening for large $p$, and less so to estimation. For cholesterol (Fig \ref{rmsecholsplit}) there is an additional  small edge for \texttt{Bayint} for the estimation of strong interactions. Finally, \texttt{2step} performs inferior to \texttt{Bayint} for a substantial number of main effects and interactions, and not superior for the others.

Supplementary Figure \ref{mainintconnect} connects the \emph{true} main effects and interactions (estimated from the master set), to provide insight on why linked shrinkage has a benefit for both outcomes. Indeed, we observe that strong interactions tend to link relatively frequently to strong main effects, and that this tendency is somewhat stronger for cholesterol, explaining the slightly larger benefit of linked shrinkage for this outcome compared to SBP.

Finally, we provide a short comparison of \texttt{Bayint} with two aforementioned alternatives, \texttt{Bay0int} and \texttt{Bayintadd}, for the cholesterol model only. From Supp. Fig. \ref{rmsebayintcompetcholsplit} we observe that particularly \texttt{Bayint} and \texttt{Bayintadd} are very competitive, with the latter slightly worse for the very non-significant interactions. \texttt{Bay0int} may pick up the strong interactions slightly better, but seems somewhat inferior for main effects and less important interactions.

\subsection{Inference and interpretation}
Here, we discuss several techniques and visualisations to perform inference and interpret results from the \texttt{Bayint} model. We focus on the model that uses cholesterol as outcome. For inference, we limit the comparison to OLS, given its ubiquitous use and well-known inferential properties, which are less well established for most of the other methods.

\subsubsection{Parameters}
We first briefly discuss inference on the parameters, before extending it to more general variable importance metrics.
Our model renders credible intervals that may be used for this purpose. We previously showed the coverage of Bayesian local shrinkage - on which our shrinkage model is based - to be rather good \cite[]{Wiel2023think} in low dimensional settings, although this will depend on the $p : n$ ratio and the amount of collinearity. Here, we focus on detection, but refrain from a formal comparison with OLS or other frequentist methods (in terms of fixing type I error), given the different perspectives on (multiple) testing.

Supplementary Figure \ref{detections} plots the detections by OLS (criteria: $p\leq 0.05, p \leq 0.01$) against those by \texttt{Bayint} (criterion: 95\% credible interval contains 0) and \texttt{Bayint*} (which fixes $\tau^2_{int} = 1$ in \eqref{bayint}; same criterion). As expected, OLS produces many detections at $p<0.05$, also for effects that are (very) small in the master set (top figure). For $p<0.01$, OLS seems to align better with the results of \texttt{Bayint} (middle figure), in particular for the weak effects and strong interactions, and less so for two strong main effects that are more frequently detected by \texttt{Bayint}. The bottom figure shows that \texttt{Bayint*} indeed detects strong interactions more often than \texttt{Bayint}, at the cost of detecting two main effects (which are involved in those interactions).

Additionally, to be less dependent on the choice of the cut-off, we plot the sensitivities of the methods for specificities ranging from 90\% to 98\%. For that, we define positives as those significant in the master set at cut-off $p \leq 0.05/99$ (Bonferroni correction) and negatives as those that either correspond to a noise covariate or are non-significant at cut-off $0.05$. Given the sheer size of the master set the latter assures that such effects are either very small or completely absent. This defines 12 positives, and 79 negatives; the remaining 99 - 12 - 79 = 8 effects are indeterminate, which are therefore not used for calculating the sensitivities and specificities.
Supplementary Figure \ref{roc} shows these, averaged over 500 subsets, where the curves are parameterized by the thresholds used for detection, either on $p$-values (OLS) or on coverage percentage of the credible interval (\texttt{Bayint} and \texttt{Bayint*}). It clearly shows that for this data set the latter two are better able to separate the true effects from the false ones, given the higher sensitivities across the specificity range.

\subsubsection{Variable importance}
The credible intervals are an important tool to assess the relevance of interaction terms.
Interpretation and inference for the main effect parameters is hampered though by the presence of those interactions, as the effect of one unit change of a covariate depends on the values of the other covariates. Therefore, technically, $\beta_j=0$ only means that for a (fictive) person with average values for all other covariates (given centering is applied), covariate $j$ has no effect. That is, it only quantifies a \emph{conditional} main effect. \cite{afshartous2011key} propose several useful alternatives, such as determining the `range of significance'. For this, one plots the confidence/credible intervals for $\beta_j + \beta_{jk}x_{ik}$ - the effect of one unit change of $x_{ij}$ when interacting with one covariate $x_{ik}$ - against $x_{ik}$. Alternatively, one may compute $E_{ij} = \beta_j + \sum_{k \neq j}\beta_{jk}x_{ik}$, i.e. a `personalized unit change effect' which accounts for all interactions. Our MCMC samples easily provide the posteriors of $E_{ij}$, allowing to plot its uncertainty as well.
A hybrid of the latter two solutions is a plot of $E_{ij}$ against $x_{ik}$ (when continuous) or for color-coded levels of $x_{ik}$ to see whether one unit change of $x_{ij}$ (say age or BMI) has a different effect on the outcome, e.g. for $x_{ik} = -1, 1$ (say female/male), while accounting for the other interacting covariates as well. Supplementary Figure \ref{deltaage} plots $E_{ij}$ and its uncertainty for 100 random test individuals (\texttt{Bayint} model fitted on 1,000 training samples), with $x_{ij}$ and $x_{ik}$ representing age and gender, respectively. We clearly observe a different effect of age increase between genders, but also within gender due to interactions of age with other covariates.

\para
Alternatively, Shapley values \cite[]{aas2021explaining} may be considered. A Shapley value $\phi_{ij}$ quantifies the average contribution of the $j$th feature to the prediction of the $i$th sample, fixing $x_{ij}=x^*_{ij}$. Here, `average' refers to a weighted average over subsets $\mathcal{S}$ that contains all other covariates $(x_{ik})_{k \in \mcS}$  that actively impact the prediction by fixing $x_{ik}=x^*_{ik}$ (called the `players'). Predictions are marginalized over the complement, $\mathcal{S}'$, which defines the non-players $(x_{i\ell})_{\ell \in \mcS'}$, which are considered random. Here, the weights are chosen such that
different sizes of $\mathcal{S}$ have an equal impact on the Shapley value. A formal definition is given in the Supplementary Material. Shapley values are popular in machine learning nowadays, because they uniquely possess several nice properties: efficiency, symmetry, dummy player and linearity \cite[]{aas2021explaining}. Obtaining its exact value is usually computationally very demanding, let alone computing uncertainties. For our model, however, it is feasible to compute Shapley values and their uncertainties efficiently, if one is willing to use the common convention that the marginalization ignores the dependency between the players and the non-players \cite[]{lundberg2017unified}, an approach referred to as `interventional Shapley value' \cite[]{aas2021explaining}. For a linear regression model with two-way interactions and centered covariates it equals
\begin{equation}\label{phipmain}
\phi_{ij} = \beta_j x_{ij}^* + \frac{1}{2}\biggl(\sum_{k: k \neq j}\beta_{jk} x_{ij}^*x_{ik}^* - \sum_{k: k \neq j}\beta_{jk} E[x_{ij}x_{ik}]\biggr),
 \end{equation}
when the $j$th covariate is continuous or binary.
A proof is provided in the Supplementary Material, which also includes expressions for the non-centered setting and categorical covariates. Note that (the posterior of) $\phi_{ij}$ is straightforward to compute after estimating $E[x_{ij}x_{ik}]$ by the sample covariance.
Again, we illustrate results for 100 random test samples and the \texttt{Bayint} model trained on 1,000 random training samples. Figure \ref{shapage} shows Shapley values and their credible intervals for age and Noise.1. The latter is a useful negative control as we observe that, as desired, all credible intervals cover 0. Note that centering of the covariates implies that Shapley values are expected to center around 0. Yet, we observe that age is an important covariate for the majority of samples as most intervals do not cover 0. Supplementary Figures \ref{coverchol} and \ref{coversbp} provide empirical evidence that these intervals, as computed from the output of \texttt{Bayint},  provide appropriate coverage for the majority of covariates and individuals.
Clearly, $\phi_{ij} = \phi_{ij}^{\text{main}} + \phi_{ij}^{\text{int}}$, which denote the contributions of the main effect and that of all interactions with covariate $j$. Supplementary
Figure \ref{shapleyall} displays the Shapley values (posterior means), and its two contributors, for all covariates.
Alternatively, Fig. \ref{shapimportance} shows the conventional variable importance derived from Shapley values, $I_j = 1/n \sum_{i=1}^n |\phi_{ij}|$, and analogously defined, $I^{\text{main}}_j$ and $I^{\text{int}}_j$. Note that, in general, $I_j \neq I^{\text{main}}_j + I^{\text{int}}_j$. Still, plotting both $I^{\text{main}}_j$ and $I^{\text{int}}_j$ renders insight on how relevant the main effects and interactions are for each covariate. While the main effects show the strongest importance scores for most covariates (except BMI), we do observe that interactions are also relevant for a fair share of the covariates.

\subsection{Model assessment by $R^2$}
Although our focus lies on parameter estimation, it is useful to compare the overall fit of \texttt{Bayint} with those of i) a basic regression model without interactions; and ii) with more advanced machine learners, such as the random forest. The first comparison allows to judge the additive value of the interactions for improving test sample fit. The second one is relevant, because the random forest holds the promise to capture interactions well and to provide adequate predictions, so it provides a useful benchmark. As $p=14$ is small relative to $n=1,000$, we simply used OLS for the main effects model; random forest was either fit using the defaults of the \texttt{rfsrc} function in the \texttt{randomForestSRC} package (\texttt{RF}) or with hyperparameters (\texttt{mtry}: number of features considered per split and \texttt{nodesize}: minimum node size) tuned for optimal predictive performance using the \texttt{tune.rfsrc} function  (\texttt{RFtune}).

For the comparison, we compute for each model and for all $b = 1, \ldots, 25$ training sets the out-of-bag coefficient of determination, $R^2_b = 1 - \sum_{i \in \mathcal{T}_b} (y_i - \hat{y}_{i,b})^2)/\sum_i(y_i - \bar{y})^2$, with $\mathcal{T}_b$ the set of all out-of-bag samples for training $b$, and $\hat{y}_{i,b}$ the prediction for test sample $i$ by the $b$th model. Figure \ref{r2s} shows the results. In-bag predictive performance is shown as well, to illustrate potential overfitting.

For cholesterol as outcome, we observe that \texttt{Bayint} provides a substantial gain in terms of $R^2$ as compared to the main effects only model. Out-of-bag predictive performance is somewhat better than that of \texttt{RF}, and marginally better than that of \texttt{RFtune}. The latter two overfit substantially, as observed from comparing the in-bag and out-of-bag performances. For SBP as outcome, differences between \texttt{Bayint} and the main effects model are small, whereas both beat \texttt{RF} and \texttt{RFtune} by a fair margin. Note that the relative performance of the random forest improves for $n=5,000$ (Supplementary Figure \ref{r2s5000}), rendering it competitive to \texttt{Bayint} in terms of prediction.



\section{Implementation and data availability}
Our linked shrinkage model was implemented in \texttt{RStan} (v 2.21.8) \cite[]{RStan}. We chose to use a general purpose sampler for several reasons. First, it allows the user to adjust the model without much extra effort in terms of fitting or inference. This includes variations on modelling the shrinkage, as illustrated, but also adjusting the likelihood to allow binary or survival outcome, as \texttt{RStan} accommodates these as well. We provide an example for logistic regression in the code. Second, \texttt{RStan} provides several diagnostic tools, such as trace plots, to check the convergence of the MCMC sampler. Our scripts are available at \url{https://github.com/markvdwiel/ThinkInteractions/}, which also contains a synthetic version of the data set. The covariates in the synthetic data are generated by imputation as described in \cite{Wiel2023think}. Then, responses (Cholesterol and SBP) are generated by drawing from normal distributions with means equal to the OLS (as fitted on the master set) predictions based on the synthetic covariates, and error variance equal to the residual error variance of the OLS model. We verified that the synthetic set renders qualitatively similar results on the regression models as those presented here (cf. Figure \ref{rmsecholsplit} and Supp. Fig. \ref{rmsecholsplit_synth};
Figure \ref{rmsesbpsplit} and Supp. Fig. \ref{rmsesbpsplit_synth}). Finally, as an indication: running time for our example data sets of $n=1,000, p=14, q=85$ is around 3min for 25,000 MCMC samples using a single core from a PC with a 1.30 GHz processor and 16Gb RAM.

\section{Discussion}
We demonstrated the potential of linked shrinkage for improving parameter estimating in fairly large regression models that include all two-way interactions. Naturally, the benefit depends on the data set and the relevance of those interactions, in particular in connection to the main effects. A limitation of our main model, \texttt{Bayint}, is that is not good in discovering interactions for which none of the main effects are relevant. Therefore, we offer an alternative, \texttt{Bayint0}, which suits such a setting better while still providing a parsimonious model for the shrinkage of interaction effects. Moreover, if one knows about such an interaction, it can be taken apart, in terms of shrinkage. Another limitation might be the linear scale of the covariates in the model. As regression comes with many tools for model diagnostics, such a model miss-specification can be diagnosed. In principle, it is straightforward to extend our model by adding non-linear transformations (log, quadratic) of a few covariates, each with their own local penalty. Optimizing the model by including such covariate transformations after viewing the data may come at the cost of invalid inference. Possibly, for large $n$, the cost of this form of data snooping may be limited when only the main effects are considered, but this requires further study. In addition, note also that if the model is very wrong its predictive performance is likely compromised, as compared to more flexible machine learners. This was not the case for our data, at least in comparison to the random forest.

A natural extension is the inclusion of higher-order interaction. In principle, this is easily achieved with our code, and may be a viable option when only few covariates are present. Note, however, that the number of three-way interactions increases cubicly with the number of covariates, and one may argue that regression models with higher-order interaction are hardly any easier to interpret than many machine learners.

As mentioned, we coded our method in \texttt{RStan}, because its flexibility allows the user to extend the framework to one's own needs, such as the various illustrated shrinkage links and non-continuous outcomes (survival, binary). A potential disadvantage of using such a general purpose sampler is that it is likely too slow for large $p$, large $n$ settings. \texttt{RStan} provides variational Bayes approximations, but we experienced that both the mean-field and the (Gaussian) full-rank approximations do not provide satisfactory results (compared to sampling) for our model. Alternatively, the \texttt{RStan} manual suggests to use a thin QR-decomposition of the design matrix for large scale regression problems, but this did not speed up computations in our setting, probably due to the non-exchangeable prior for the regression coefficients. Dedicated sampling or variational Bayes techniques such as proposed for the horseshoe prior in \cite{makalic2015simple} and \cite{busatto2023informative} may provide more scalable alternatives, but this requires more research.

While certainly simpler than many machine learners, the interpretation of a regression model with many two-way interactions is not trivial \cite[]{afshartous2011key}. Inference for parameters and variable importance scores aids such interpretation.  Therefore, we showed that our methods provides a good basis for inference, which we extend to variable importance scores, in particular the personalized unit change and the Shapley value. For the interventional version of the latter, we derived the computational efficient formula \eqref{phipmain}.  If one prefers to account for dependency when marginalizing, approaches as in \cite[]{aas2021explaining} may be explored, but likely at a high computational cost. For other regressions, e.g. logistic or Cox, formula \eqref{phipmain} is only valid if one evaluates the prediction on the level of the linear predictor. This may be reasonable as the regression coefficients are also on that scale.

While it is straightforward to define global variable importance scores from the personalized ones, including Shapley, it less obvious how to perform inference for these, both from a technical and philosophical perspective. As for the first: one may average absolute or squared scores, but the result lacks a natural null. As for the second: in such models, the importance of a covariate depends on the values of the other ones, and is hence individual specific. Possibly, an informal argument, such as: the credible intervals should not contain `0' for at 10\% of the individuals, may be reasonable in practice, but this needs further research.

Finally, we illustrated that the addition of two-way interactions, as in \texttt{bayint}, may improve predictive performance with respect to a main effect only regression model. In fact, \texttt{bayint} can be very competitive to more advance machine learners, such as random forest. Of course, such comparative results depend strongly on the data, sample size and true complexity of the associations between covariates and response. If prediction is an important aim of the study, possibly alongside interpretation, we recommend comparing the predictive performance of \texttt{bayint} with more flexible machine learners. In the eventual case of inferior predictive performance of the former, one should balance this loss against the improved interpretability.

All-in-all, \texttt{bayint} is an attractive alternative for existing strategies to handle interactions in epidemiological and/or clinical studies, as its linked local shrinkage can improve parameter accuracy, provides appropriate inference and interpretation, and may compete well with less interpretable machine learners in terms of prediction.

\section{Acknowledgements}
The Amsterdam University Medical Centres and the Public Health Service of Amsterdam (GGD Amsterdam) provided core financial support for HELIUS. The HELIUS study is also funded by research grants of the Dutch Heart Foundation (Hartstichting; grant no. 2010T084), the Netherlands Organization for Health Research and Development (ZonMw; grant no. 200500003), the European Integration Fund (EIF; grant no. 2013EIF013) and the European Union (Seventh Framework Programme, FP-7; grant no. 278901).

\bibliography{ShrinkInteractions_vdWiel.bbl}

\newpage
\section{Figures}

\begin{figure}[h]
\begin{center}
\includegraphics[scale=0.44]{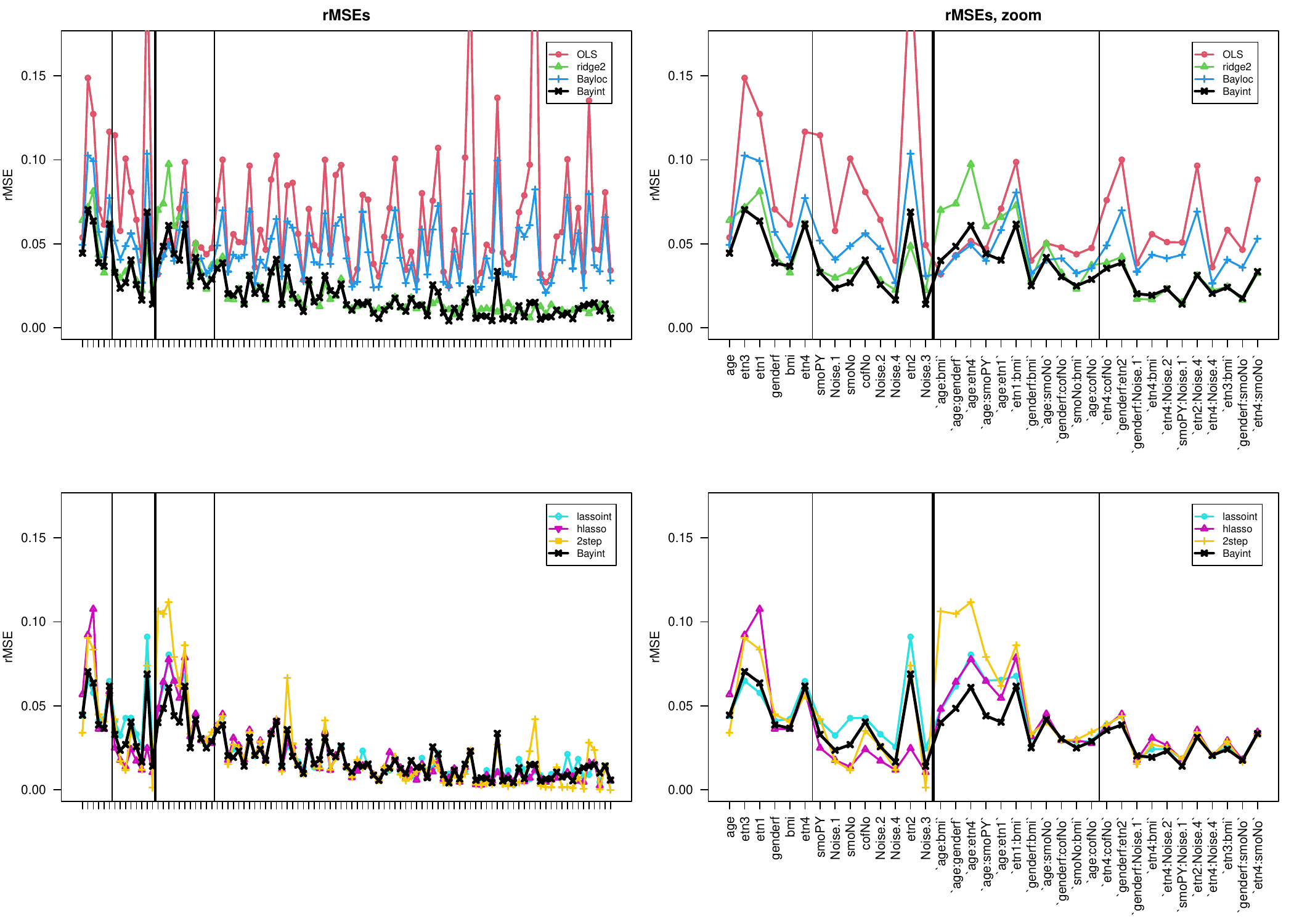}
\end{center}
\caption{rMSEs for 14 main effects and 85 interactions (before and after bold vertical line), each ordered by significance in master set. Thin vertical line demarcates effects significant and non-significant effects in the master set (p$<$.01). Upper and lower plots compare \texttt{Bayint} with methods that do not and do target selection, respectively. Right plots zoom in on main effects and most relevant interactions. \textbf{Cholesterol} as outcome.}\label{rmsecholsplit}
\end{figure}

\newpage
\begin{figure}[h]
\begin{center}
\includegraphics[scale=0.44]{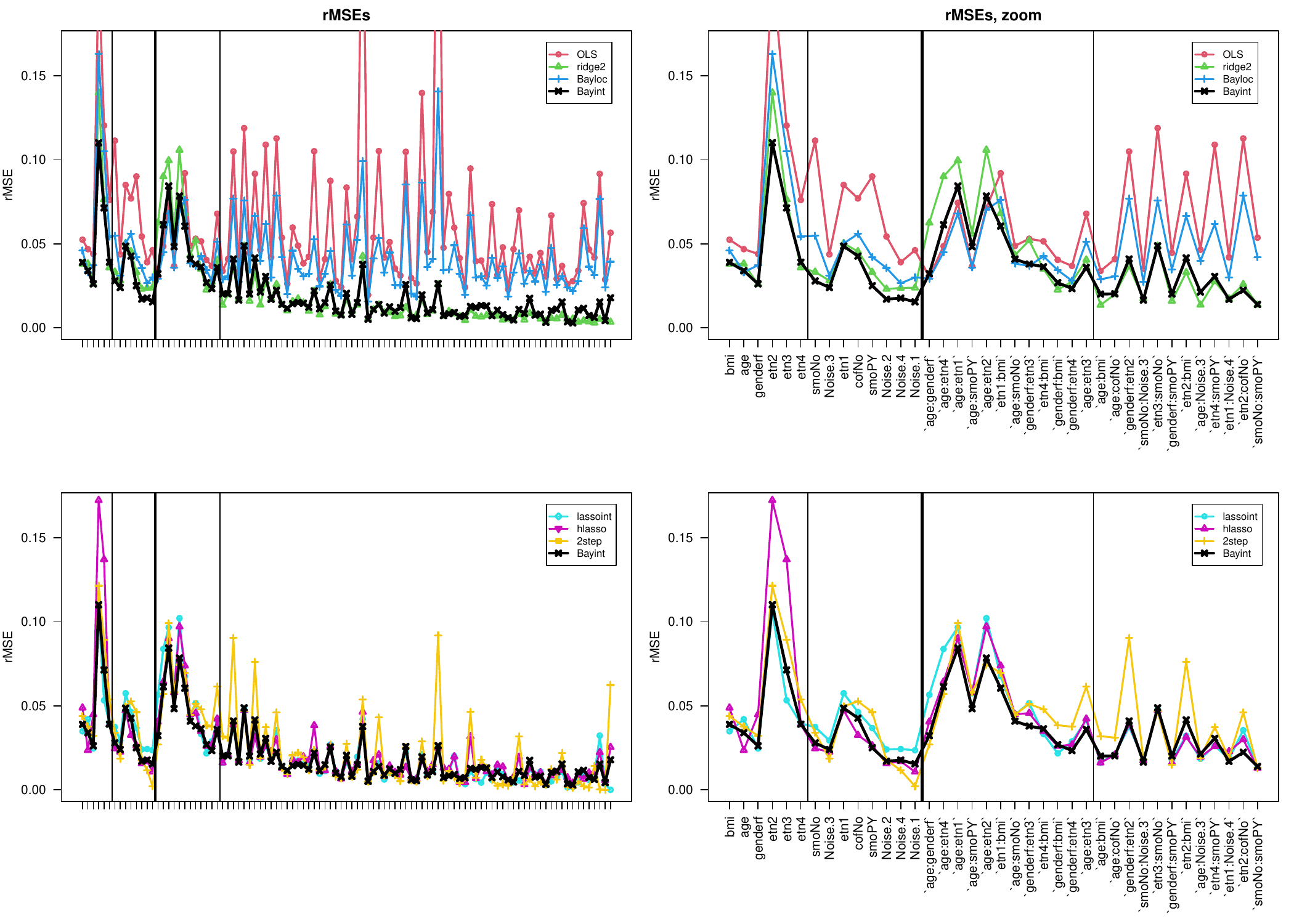}
\end{center}
\caption{rMSEs for 14 main effects and 85 interactions (before and after bold vertical line), each ordered by significance in master set. Thin vertical line demarcates effects significant and non-significant effects in the master set (p$<$.01). Upper and lower plots compare \texttt{Bayint} with methods that do not and do target selection, respectively. Right plots zoom in on main effects and most relevant interactions. \textbf{SBP} as outcome.}\label{rmsesbpsplit}
\end{figure}

\newpage
\begin{figure}[h]
\begin{center}
\includegraphics[scale=0.5]{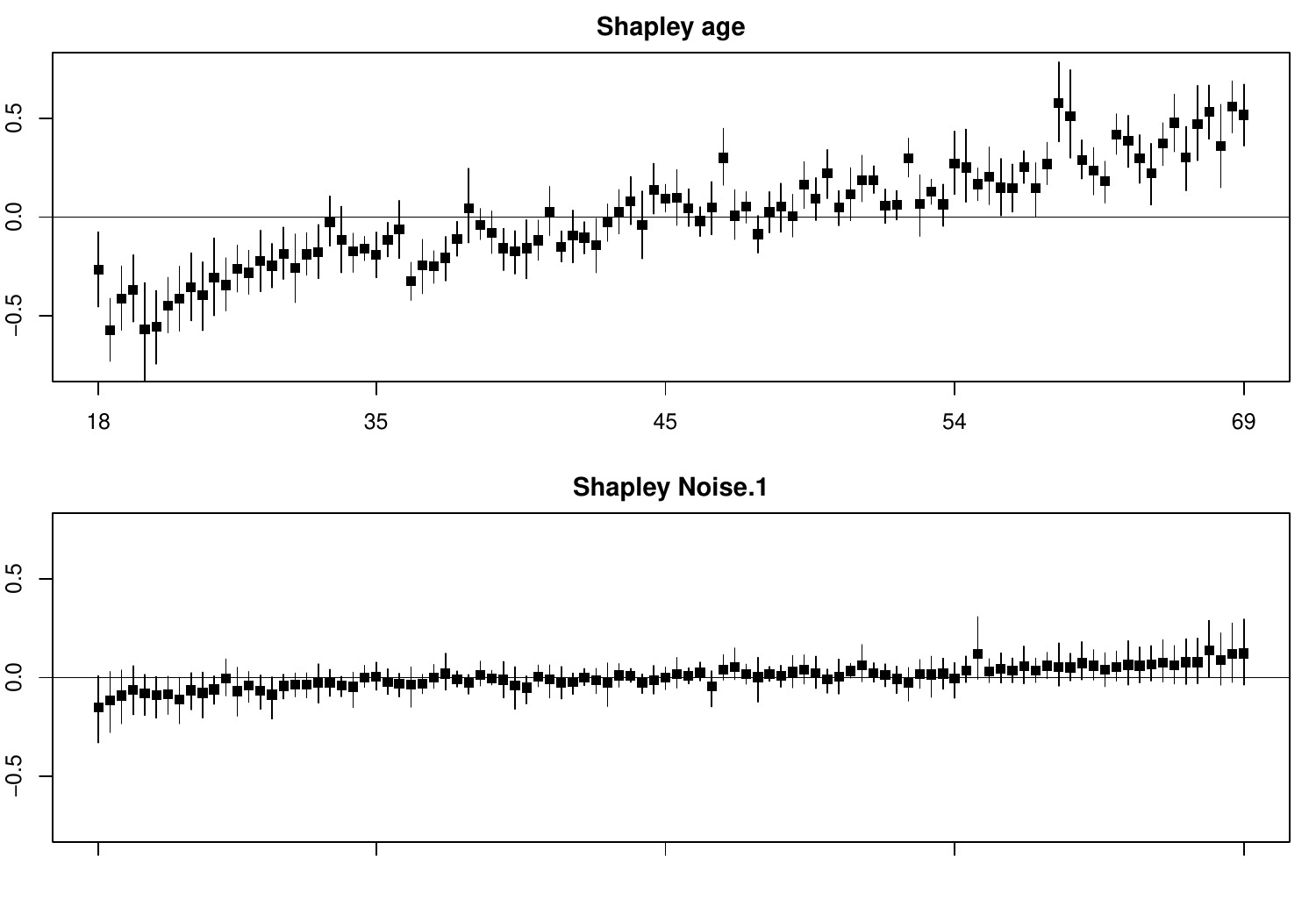}
\end{center}
\caption{Shapley values of `age' and `Noise.1' and their posteriors for 100 random test individuals, ordered by `age' (original scale) and `Noise.1', respectively. \textbf{Cholesterol} (standardized) as outcome.}\label{shapage}
\end{figure}

\newpage
\begin{figure}[h]
\begin{center}
\includegraphics[scale=0.35]{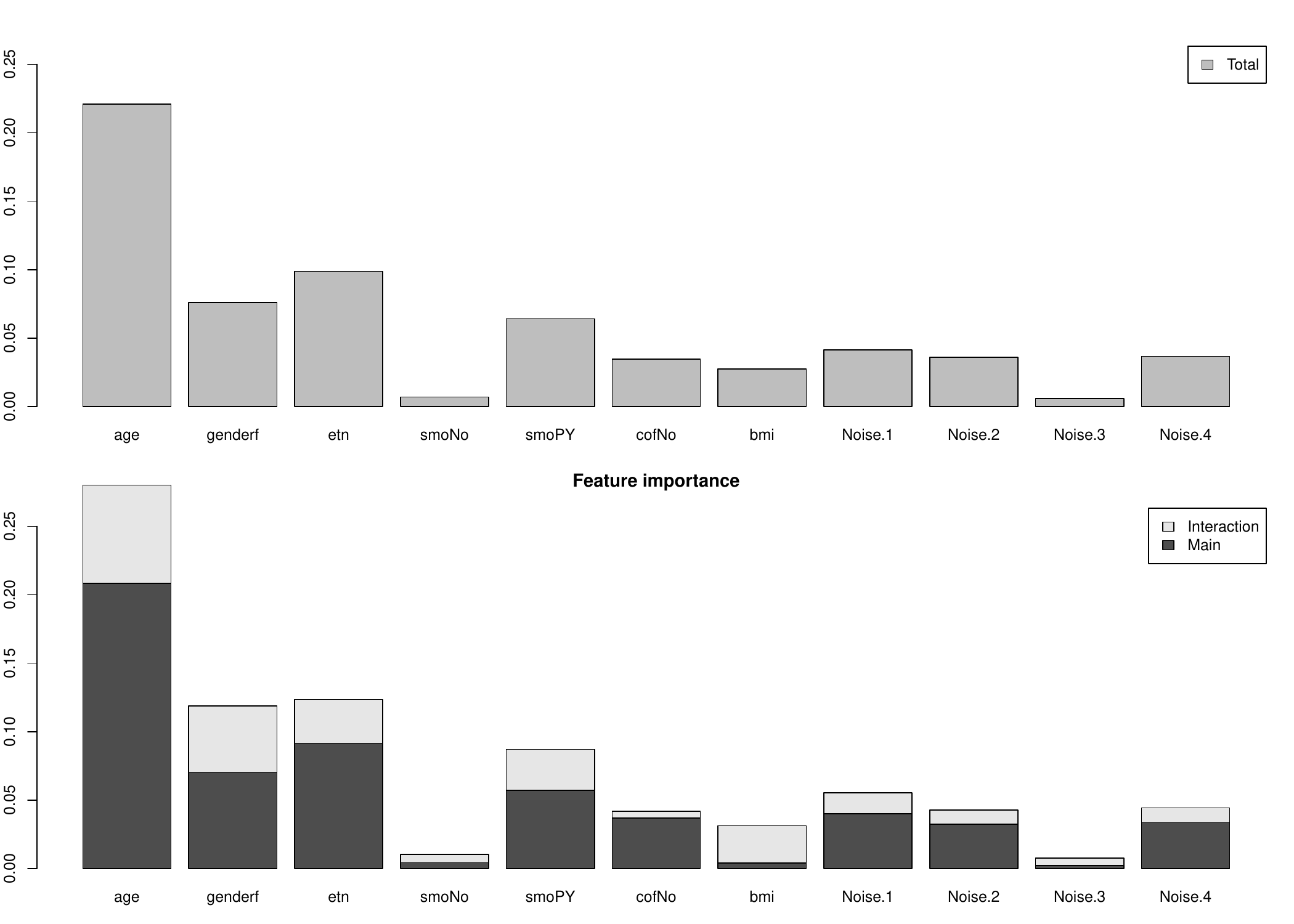}
\end{center}
\caption{Global variable importance scores. Top: mean absolute Shapley values ($I_j$); Bottom: mean absolute contributions of main effect ($I_j^{\text{main}}$) and interactions ($I_j^{\text{int}}$). \textbf{Cholesterol} as outcome.}\label{shapimportance}
\end{figure}

\newpage
\begin{figure}[h]
\begin{center}
\includegraphics[scale=0.4]{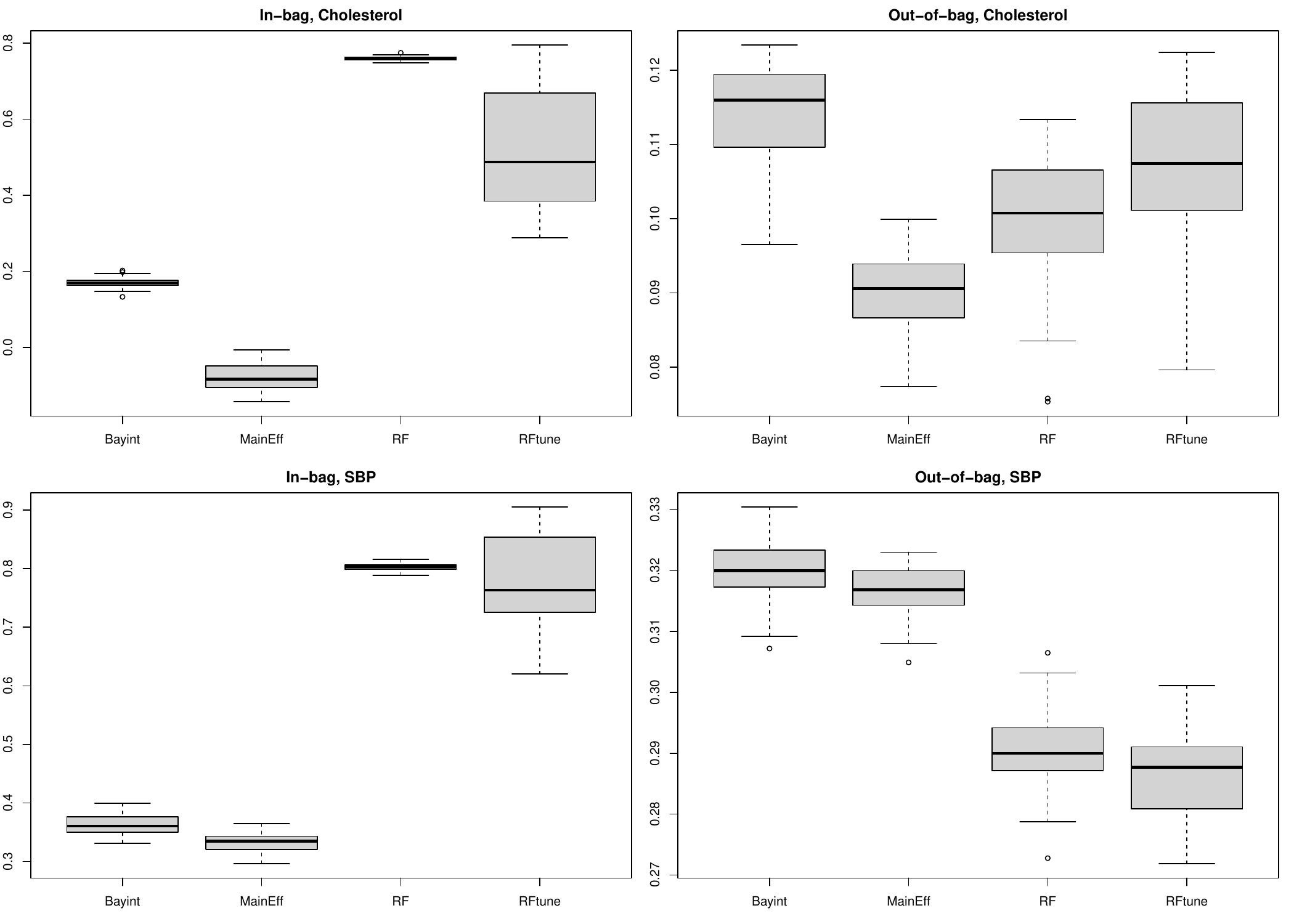}
\end{center}
\caption{In-bag and out-of-bag $R^2$s for 25 training sets of size n=1,000 for cholesterol (top) and SBP (bottom) as outcome. Methods: \texttt{Bayint}: Bayesian linked shrinkage model; \texttt{MainEff}: OLS with main effects only; \texttt{RF} (\texttt{RFtune}): Random Forest with default (tuned) parameters.
}\label{r2s}
\end{figure}

\newpage

\section{Supplementary Material}
\setcounter{figure}{0}
\subsection{Implementation of alternative methods}
Below we give details on the implementation of each of the alternative methods used in the manuscript. Code is available on:
\url{https://github.com/markvdwiel/ThinkInteractions/}.
\begin{itemize}
   \item \texttt{OLS} and \texttt{2step} are fit using the base \texttt{R lm} function, where the latter first fits a `main effect only model', and then fits a new model with only the significant main effects ($p<0.05$) and their interactions
  \item \texttt{ridge2} is fit using \texttt{mgcv} (v 1.8-42) \cite[]{wood2011fast}, which automatically tunes the two penalties
  \item \texttt{lassoint} is fit using \texttt{glmnet} (v 4.1-7) using the penalty that minimises the 10-fold cross-validated squared prediction error (default)
\item \texttt{hlasso} is fit using \texttt{glinternet} (v 1.0.12) with the penalty cross-validated as for \texttt{lassoint} (default)

  \item Variations of our model,  \texttt{Bayintadd}, \texttt{Bay0int} and \texttt{Bayint*}, as well as \texttt{Bayloc} are implemented in \texttt{RStan} (v 2.21.8), using the exact same priors and sampling scheme as used for our model (\texttt{Bayint})
 \item \texttt{RF} is either fit using the defaults of the \texttt{rfsrc} function in the \texttt{randomForestSRC} package (v 3.2.2) or with hyperparameters (\texttt{mtry} and \texttt{nodesize}) tuned for optimal predictive performance using the \texttt{tune.rfsrc} function
\end{itemize}


\subsection{Shapley for regression model with two-way interactions}
Here, we derive the interventional Shapley value for our model. For clarity we fix the sample index $i$, and denote $x_j = x_{ij}$ and its realization by
$x^*_j$. W.l.o.g. we assume index $p$ for the covariate of which we compute the Shapley value. Denote by $\mcS \subseteq \mathcal{P} = \{1, \ldots, p\}$ the set of players, i.e. the covariate indices for which their value is set at the realization, and its complement
$\mcS'$: the set of non-players over which the predictions are marginalized.
Define weights $w(\mcS) = \frac{1}{p}[\binom{p-1}{|\mcS|}]^{-1}.$
Write $\bm{x} = (x_j)_{j=1}^p, \bm{x}_{\mcS} = (x_j)_{j \in  \mcS}, \bm{x}^*_\mcS = (x^*_j)_{j \in  \mcS},$ and  $f(\bm{x})$ for the predicted value.
Moreover, let $\nu(\mcS) = E[f(\bm{x})|\bm{x}_\mcS = \bm{x}^*_\mcS]$, which is the expected predicted value, with the expectation computed over all covariates $j \in \mcS'$.
Then,
\begin{equation}\label{phidef}
\phi_p =  \sum_{\mcS \in \mathcal{P} \setminus \{p\}} w(\mcS) \biggl(\nu(\mcS \cup \{p\}) - \nu(\mcS)\biggr).
\end{equation}
For the interventional Shapley, it is assumed that $P(\bm{x}_{\mcS'} | \bm{x}_{\mcS} = \bm{x}^*_\mcS) = P(\bm{x}_{\mcS'}).$ Then,
$\nu(\mcS) = E_{\bm{x}_{\mcS'}}[f(\bm{x}^*_\mcS,\bm{x}_{\mcS'})]$.

\para
The following result shows that for a linear regression model with all main effects and two-way interactions the exact $\phi_p$ can be computed with linear complexity $\mathcal{O}(p)$ instead of exponential. For simplicity, we first assume categorical variables are absent.

\para
\begin{theorem}
Let $\phi_p$ be defined as in \eqref{phidef} and assume $P(\bm{x}_{\mcS'} | \bm{x}_{\mcS} = \bm{x}^*_\mcS) = P(\bm{x}_{\mcS'}).$ Then for $f(\bm{x}) =
\alpha_0 + \sum_{j=1}^p \beta_j x_j + \sum_{k=2}^p \sum_{j=1}^{k-1} \beta_{jk} x_j x_k$ we have:
\begin{equation}\label{phip}
\phi_p = \beta_p(x_p^* - E[x_p]) + \frac{1}{2}\biggl(\sum_{j=1}^{p-1}\beta_{jp}(E[x_j]x_p^*  - E[x_jx_p]) + \sum_{j=1}^{p-1}\beta_{jp}(x_j^*x_p^* - x_j^* E[x_p])\biggr).
 \end{equation}
\end{theorem}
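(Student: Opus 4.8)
The plan is to exploit the \emph{linearity} of Shapley values. Since the interventional value function $\nu(\mcS) = E_{\bm{x}_{\mcS'}}[f(\bm{x}^*_\mcS,\bm{x}_{\mcS'})]$ is a linear functional of $f$, the map $f \mapsto \phi_p$ is linear, so the Shapley value of a sum of terms is the sum of the individual Shapley values. I would therefore split $f$ into its intercept $\alpha_0$, the $p$ main-effect terms $\beta_j x_j$, and the $\binom{p}{2}$ interaction terms $\beta_{jk}x_jx_k$, and compute $\phi_p$ term by term, reassembling at the end.

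First I would dispose of the trivial terms. The intercept and every term not containing $x_p$ are dummy players for covariate $p$ (their value is unaffected by adding $p$ to any coalition), so their contributions vanish; this kills all $\beta_j x_j$ with $j \neq p$ and all interactions with $p \notin \{j,k\}$. The only survivors are the main effect $\beta_p x_p$ and the interactions $\beta_{jp} x_j x_p$ for $j = 1,\ldots,p-1$. For the main effect, I would observe that for every coalition $\mcS \not\ni p$ one has $\nu(\mcS) = \beta_p E[x_p]$ and $\nu(\mcS \cup \{p\}) = \beta_p x_p^*$, so the marginal contribution $\beta_p(x_p^* - E[x_p])$ is constant in $\mcS$; because the weights $w(\mcS)$ sum to one, this term contributes exactly $\beta_p(x_p^* - E[x_p])$.

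The core of the argument is a single interaction $x_j x_p$ (I would set $\beta_{jp}=1$ and reinstate it afterward). Here I would evaluate $\nu(\mcS)$ by case analysis on the membership of $j$ and $p$, taking care with the interventional convention $P(\bm{x}_{\mcS'}|\bm{x}_\mcS=\bm{x}^*_\mcS)=P(\bm{x}_{\mcS'})$: when both $j$ and $p$ are non-players the marginalization is over their \emph{joint} law and yields the cross-moment $E[x_jx_p]$, whereas a fixed player times a marginalized non-player factorizes as (fixed value) $\times E[\cdot]$. For a coalition $\mcS \not\ni p$ this gives a marginal contribution $x_j^*(x_p^* - E[x_p])$ when $j \in \mcS$, and $E[x_j]x_p^* - E[x_jx_p]$ when $j \notin \mcS$.

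Finally I would collect these contributions, grouping coalitions by size $s=|\mcS|$: among the size-$s$ subsets of $\{1,\ldots,p-1\}$, exactly $\binom{p-2}{s-1}$ contain $j$ and $\binom{p-2}{s}$ do not. Using the ratios $\binom{p-2}{s-1}/\binom{p-1}{s} = s/(p-1)$ and $\binom{p-2}{s}/\binom{p-1}{s} = (p-1-s)/(p-1)$, together with $\sum_{s=1}^{p-1} s = \binom{p}{2}$, both weighted sums $\sum_s w(\mcS)(\cdots)$ collapse to $\tfrac12$. Multiplying back by $\beta_{jp}$, summing over $j$, and adding the main-effect piece then reproduces \eqref{phip} exactly (the $j\notin\mcS$ case feeding the $E[x_j]x_p^* - E[x_jx_p]$ sum and the $j\in\mcS$ case the $x_j^*x_p^* - x_j^*E[x_p]$ sum). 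The main obstacle I anticipate is purely bookkeeping: getting the interventional expectation right in the doubly-marginalized case (securing $E[x_jx_p]$ rather than $E[x_j]E[x_p]$) and verifying that both combinatorial weight sums reduce to $1/2$; once these are settled the result is immediate.
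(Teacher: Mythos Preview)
Your proposal is correct and follows essentially the same route as the paper: both identify that only terms involving $x_p$ survive, perform the same case split on whether $j\in\mcS$, and close with the identical combinatorial reduction $w_\ell\binom{p-2}{\ell-1}=\ell/[p(p-1)]$ (and its complement) summing to $1/2$. The only cosmetic difference is that you invoke the linearity of $f\mapsto\phi_p$ at the outset to treat each monomial separately, whereas the paper computes the full interaction contribution $\nu_i(\mcS\cup\{p\})-\nu_i(\mcS)$ in one block and lets the terms not involving $p$ cancel in the difference; the resulting $T_j^1,T_j^2$ and the subsequent counting are exactly yours.
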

\begin{proof}
First, for convenience we write $\sum_{k=2}^p \sum_{j=1}^{k-1} \beta_{jk} x_j x_k = \frac{1}{2}\sum_{j,k} \beta_{jk} x_j x_k,$ with $\beta_{kj}=\beta_{jk}$ and $\beta_{jj} = 0$. Then, decompose $\nu(\mcS)$ into the main effects term (including the intercept) and an interaction term:
$\nu(\mcS) = \nu_m(\mcS) + \nu_i(\mcS).$ Analogously, write $\phi_p = \phi^m_p + \phi^i_p.$ \cite{aas2021explaining} show that $\phi^m_p = \beta_p(x_p^* - E[x_p])$.
Hence, we focus on the contribution of the interactions, which equals:
\begin{equation*}
\nu_i(\mcS) =  \frac{1}{2}\bigg(\sum_{\substack{j \in \mcS'\\k \in \mcS'}} \beta_{jk} E[x_jx_k] +  2\sum_{\substack{j \in \mcS \\ k \in \mcS'}} \beta_{jk}  x_j^* E[x_k] + \sum_{\substack{j \in \mcS \\ k \in \mcS}} \beta_{jk} x_j^* x_k^*\biggr).
\end{equation*}
Likewise,
\begin{equation*}
\nu_i(\mcS \cup \{p\}) =  \frac{1}{2}\bigg(\sum_{\substack{j \in \mcS' \setminus \{p\}\\k \in \mcS'\setminus \{p\}}} \beta_{jk} E[x_jx_k] +  2\sum_{\substack{j \in \mcS \cup \{p\}\\k \in  \mcS'\setminus \{p\}}} \beta_{jk}  x_j^* E[x_k] + \sum_{\substack{j \in \mcS \cup \{p\}\\k \in \mcS \cup \{p\}}} \beta_{jk} x_j^* x_k^*\biggr).
\end{equation*}
Therefore, we have:
\begin{equation}\label{nudif}
\begin{split}
\nu_i(\mcS &\cup \{p\}) - \nu_i(\mcS)\\
&=  -\sum_{j \in \mcS' \setminus \{p\}} \beta_{jp} E[x_jx_p] + \sum_{j \in \mcS' \setminus \{p\}} \beta_{jp} E[x_j]x_p^* -
 \sum_{j \in \mcS}\beta_{jp} x_j^* E[x_p] + \sum_{j \in \mcS}\beta_{jp} x_j^* x_p^* \\
 &= \sum_{j \in \mcS' \setminus \{p\}}T_j^1 +  \sum_{j \in \mcS}T_j^2,
 \end{split}
\end{equation}
with $T_j^1 =  \beta_{jp} (E[x_j]x_p^* -  E[x_jx_p])$ and $T_j^2 =  \beta_{jp} (x_j^*x_p^* -  x_j^*E[x_p])$.
Then, $$\phi^i_p = \sum_{\mcS} w(\mcS)\biggl( \sum_{j \in \mcS' \setminus \{p\}} T_j^1 +  \sum_{j \in \mcS} T_j^2\biggr).$$
To compute $\phi^i_p$, we have for the second term:
$$\sum_{\mcS} w(\mcS) \sum_{j \in \mcS} T_j^2 = \sum_{\mcS} w(\mcS) \sum_{j =1}^{p-1} I(j \in \mcS) T_j^2 = \sum_{k=1}^{p-1} w_\ell \binom{p-2}{k-1}\sum_{j=1}^{p-1} T_j^2,$$
as the weight $w(\mcS) = w_\ell = 1/p[\binom{p-1}{\ell}]^{-1},$ when $|\mcS| = \ell$, and $\binom{p-2}{\ell-1}$ counts the number of size $\ell$ subsets out of $\{1, \ldots, p-1\}$ that necessarily contain $j$, which equals the number of size $\ell-1$ subsets out of $\{1, \ldots, p-1\} \setminus \{j\}$.
Likewise for the first term:
$$\sum_{\mcS} w(\mcS) \sum_{j \in \mcS' \setminus \{p\}} T_j^1 = \sum_{\mcS} w(\mcS) \sum_{j=1}^{p-1} I(j \in \mcS' \setminus \{p\}) T_j^1 = \sum_{\ell=0}^{p-2} w_\ell \binom{p-2}{p-2-\ell}\sum_{j=1}^{p-1} T_j^1,$$ as $|\mcS| = \ell$ implies $|\mcS' \setminus \{p\}| = p-1-\ell$ and  $\binom{p-2}{p-2-\ell}$ counts the number of size $p-1-\ell$ subsets out of $\{1, \ldots, p-1\}$ that necessarily contain $j$. Finally, note that $w_\ell \binom{p-2}{\ell-1} = (1/p) \binom{p-2}{\ell-1}/\binom{p-1}{\ell} = \frac{\ell}{p(p-1)},$ and $w_\ell \binom{p-2}{p-2-\ell} = \frac{p-1-\ell}{p(p-1)}$ and that $\sum_{\ell=0}^{p-2} (p-2-\ell) = \sum_{\ell=1}^{p-1} \ell = p(p-1)/2$. Substitution completes the proof.
\end{proof}

\begin{cor}
If covariates are centered, we have:
\begin{equation}\label{phip}
\phi_p = \beta_p x_p^* + \frac{1}{2}\biggl(\sum_{j=1}^{p-1}\beta_{jp} x_j^*x_p^* - \sum_{j=1}^{p-1}\beta_{jp} E[x_jx_p]\biggr).
 \end{equation}
 \end{cor}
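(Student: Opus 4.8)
The plan is to obtain the corollary as a direct specialization of the preceding Theorem, exploiting that ``centered'' covariates have vanishing first moments. Concretely, centering means $E[x_j] = 0$ for every $j \in \{1, \ldots, p\}$, and in particular $E[x_p] = 0$. First I would carry these identities into the three blocks of the Theorem's conclusion term by term. The main-effect block $\beta_p(x_p^* - E[x_p])$ reduces to $\beta_p x_p^*$. In the first interaction sum the summand $\beta_{jp} E[x_j] x_p^*$ drops out, so that $\sum_{j=1}^{p-1}\beta_{jp}(E[x_j]x_p^* - E[x_jx_p])$ collapses to $-\sum_{j=1}^{p-1}\beta_{jp} E[x_jx_p]$. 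Symmetrically, in the second interaction sum the summand $\beta_{jp} x_j^* E[x_p]$ drops out, leaving $\sum_{j=1}^{p-1}\beta_{jp} x_j^* x_p^*$.

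Collecting the two surviving sums under the shared factor $\frac{1}{2}$ then yields exactly the stated expression for $\phi_p$. The one point that merits attention is that centering kills only first moments: the mixed second moment $E[x_jx_p]$ is generally nonzero (after centering it coincides with the covariance of $x_j$ and $x_p$), so this cross term must be retained rather than discarded. There is no genuine obstacle here; the result is a corollary in the literal sense, and the verification amounts to a single substitution of $E[x_j]=0$ into the Theorem's formula, after which the terms regroup into the claimed identity.
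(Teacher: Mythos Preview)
Your proposal is correct and matches the paper's own proof exactly: the paper simply states that the result follows from the fact that centering implies $E[x_j]=0$, which is precisely the substitution you carry out term by term. Your additional remark that $E[x_jx_p]$ survives (and equals the covariance after centering) is a helpful clarification, but the argument is the same.
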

\begin{proof} This simply follows from the fact that centering implies $E[x_j] = 0$.
\end{proof}

\begin{cor}
If a categorical covariate is present, then:
\begin{enumerate}
\item For a non-categorical covariate $\phi_p$ remains unchanged
\item For a categorical covariate $\tilde{p}$ with $Q$ levels, corresponding to indices $\{p+1 - Q, \ldots, p\}$,
\begin{equation}\label{phip2}
\phi_{\tilde{p}} = \sum_{q=1}^Q \phi_{p+1-q}.
 \end{equation}
 \end{enumerate}
 \end{cor}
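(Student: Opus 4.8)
The plan is to treat the categorical covariate $\tilde p$ as a single \emph{player} in the Shapley game, so that any coalition either fixes all of its dummies $D = \{p+1-Q,\dots,p\}$ at their realizations or marginalizes all of them jointly; everything then reduces to re-running the derivation behind \eqref{phip} with this atomic player. As in the theorem I split $\phi = \phi^m + \phi^i$. For the main-effect part, additivity of the Shapley value over the linear main-effect terms (the result of \cite{aas2021explaining} already invoked in the theorem) immediately gives $\phi^m_{\tilde p} = \sum_{d\in D}\beta_d(x_d^* - E[x_d]) = \sum_{d\in D}\phi^m_d$, so the real work is confined to the interaction part.

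For statement 2, I would compute the marginal value $\nu_i(\mcS\cup\{\tilde p\}) - \nu_i(\mcS)$ exactly as in \eqref{nudif}, but now adding the whole block $D$ at once. Since interactions between the levels of the same categorical covariate are excluded, i.e. $\beta_{jk}=0$ whenever $j,k\in D$, the only interaction terms that change are $\beta_{jd}x_jx_d$ with $d\in D$ and $j\notin D$. Each such term contributes exactly the theorem's $T^1_j$ or $T^2_j$ with the role of $p$ played by $d$, according to whether $j$ is currently marginalized or fixed; hence the block's marginal contribution is $\sum_{d\in D}$ of the per-dummy contributions, summed over $j\notin D$. Feeding this into the coalition sum, the weights $w(\mcS)$ telescope to the same factor $\tfrac12$ as in the theorem, so that $\phi^i_{\tilde p} = \tfrac12\sum_{j\notin D}\sum_{d\in D}(T^{1,(d)}_j + T^{2,(d)}_j)$. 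I then extend the inner sum back to all $j\neq d$: this adds only terms with $j\in D\setminus\{d\}$, which vanish because $\beta_{jd}=0$, so the expression equals $\sum_{d\in D}\phi^i_d$, which combined with the main-effect part yields \eqref{phip2}.

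The cleanest way to organize both the weight telescoping and statement 1 is the random-order (permutation) form of the Shapley value: $\phi$ is the expected marginal contribution of a player inserted into a uniformly random ordering. For any two distinct players the probability that one precedes the other is $\tfrac12$, independently of how many players there are, and the grouped player's marginal contribution decomposes additively over the dummies in $D$. For statement 1, a non-categorical covariate $p'$ still sees every other player precede it with probability $\tfrac12$; expanding the grouped player into its dummies reproduces exactly $\tfrac12\sum_{j\neq p'}(T^1_j+T^2_j)$, i.e. formula \eqref{phip} verbatim, so $\phi_{p'}$ is unchanged. I expect the main obstacle to be precisely this bookkeeping point: grouping reduces the player count from $p$ to some smaller $P$, which changes every weight $w(\mcS)$, and one must verify both that the resulting coefficient is nonetheless invariantly $\tfrac12$ and that coalitions are summed subject to the constraint that the dummies of $D$ enter or leave together. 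The permutation viewpoint makes this transparent, since the $\tfrac12$ is just a pairwise-ordering probability that does not depend on $P$.
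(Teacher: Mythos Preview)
Your proof is correct and follows the same approach as the paper: treat the categorical covariate as a single atomic player, use linearity of $\nu$ over the dummies' regression terms together with the convention $\beta_{jk}=0$ for $j,k\in D$, and reduce to the theorem's computation. The paper is terser---it simply asserts that replacing $p$ by the grouped player count $p^{-}$ in the weights and combinatorics ``renders exactly the same result'' and that $\nu(\mcS\cup\{\tilde p\})$ extends linearly over the dummies---so your permutation-form argument for the $\tfrac12$ weight invariance is a welcome elaboration of the bookkeeping rather than a genuinely different route.
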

\begin{proof}
First, \emph{1.} follows directly from redefining $\mathcal{P} = 1, \ldots, p^{-}$, where $p^{-}$ is the number of covariates, counting
categorical variables as one. When a categorical variable $\tilde{p}$ is in $\mcS$ all its modeling components contribute to $\nu(\mcS)$. Then, replacing $p$ by $p^{-}$ in the weights $w$ and in the combinatorial argument above, renders exactly the same result for a non-categorical covariate.
For \emph{2.}, simply note that $\nu(\mcS \cup \{\tilde{p}\})$ extends linearly over all regression terms with indices $p+1 - Q, \ldots, p$, when excluding interaction terms between levels of the $\tilde{p}$ in the regression model, as is the convention.
\end{proof}

\subsection{Supplementary Figures}


\begin{figure}[h]
\begin{center}
\includegraphics[scale=0.35]{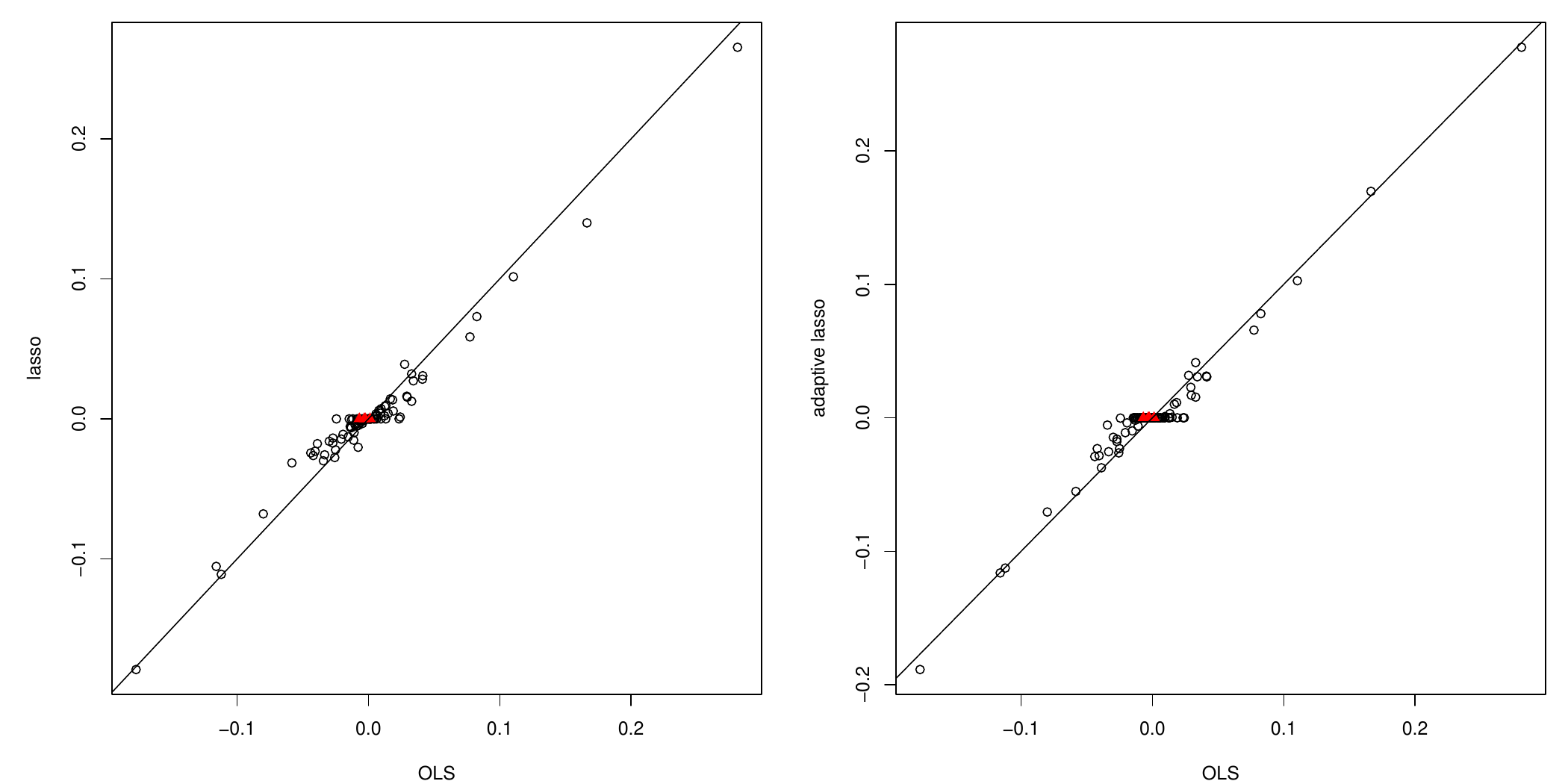}
\end{center}
\caption{OLS estimates (x-axis) versus (adaptive)  lasso estimates (y-axis) on the master set ($N=21,570$). Red triangles: estimates for Noise main effects.
Adaptive lasso uses OLS-based penalty weights to de-bias estimates of large coefficients. \textbf{Cholesterol} as outcome.}\label{olsvslassochol}
\end{figure}

\begin{figure}[h]
\begin{center}
\includegraphics[scale=0.35]{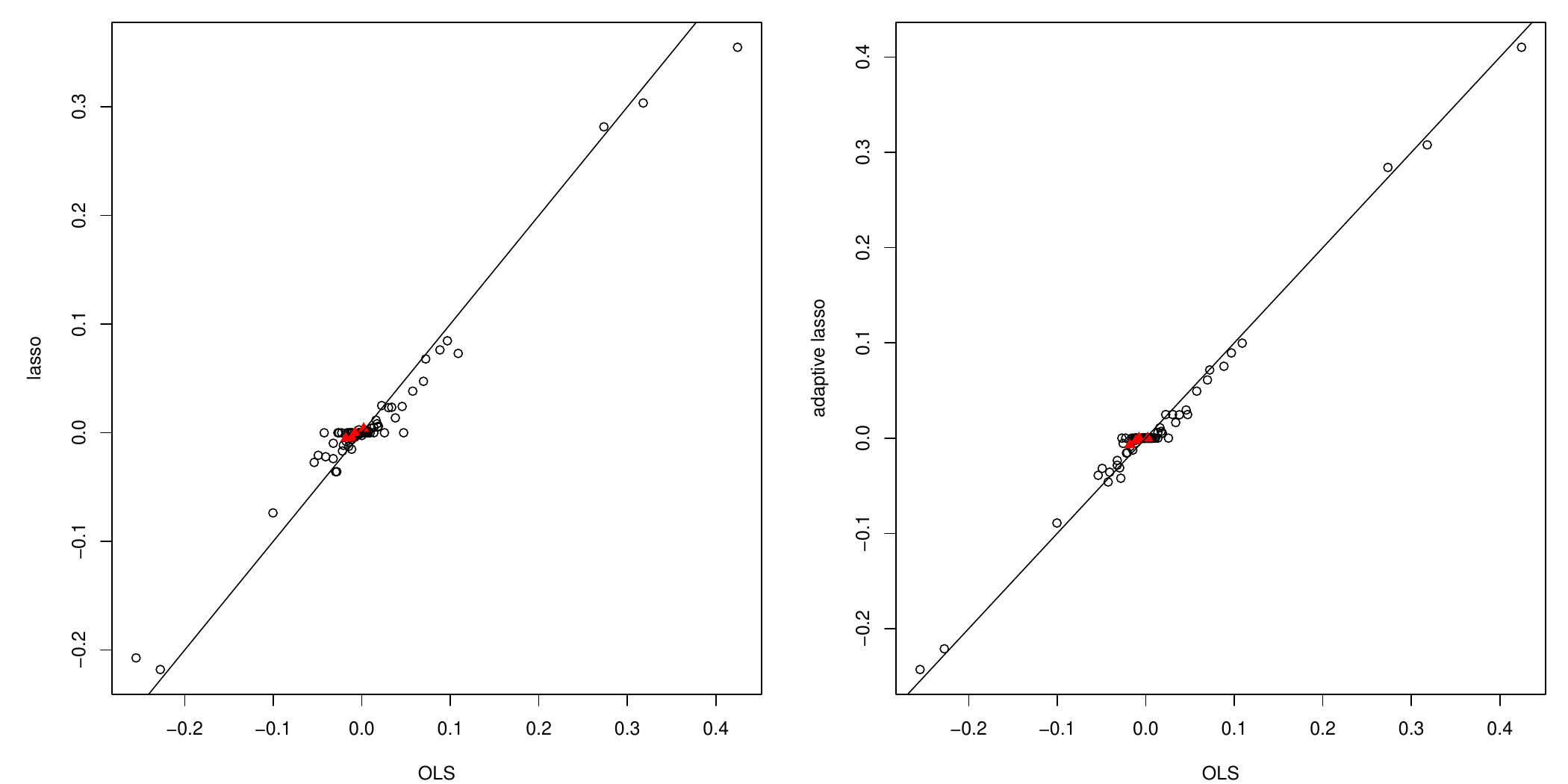}
\end{center}
\caption{OLS estimates (x-axis) versus (adaptive)  lasso estimates (y-axis) on the master set ($N=21,570$). Red triangles: estimates for Noise main effects.
Adaptive lasso uses OLS-based penalty weights to de-bias estimates of large coefficients. \textbf{SBP} as outcome.}\label{olsvslassosbp}
\end{figure}

\begin{figure}[h]
\begin{center}
\includegraphics[scale=0.35]{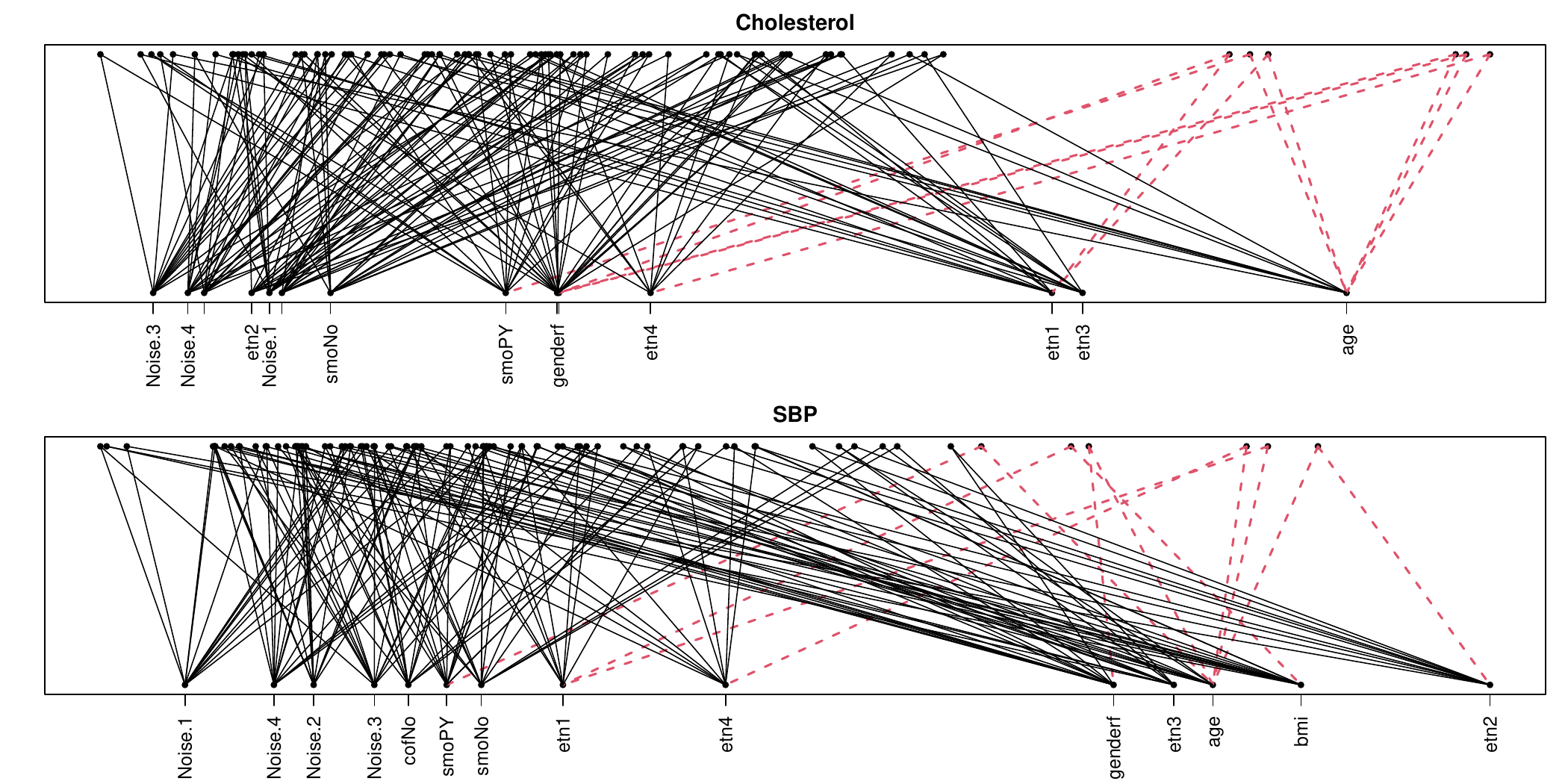}
\end{center}
\caption{Connecting main effects to interactions for Cholesterol model (top) and SBP model (bottom). For each plot, lower line of dots shows square-root of absolute \emph{true} coefficients (estimated from Master set) for main effects, top line for interactions (scaled to match the scale of the main effects). Interactions are connected with their corresponding main effects. Red dashed lines are used for the six largest interaction effects. }\label{mainintconnect}
\end{figure}

\begin{figure}[h]
\begin{center}
\includegraphics[scale=0.38]{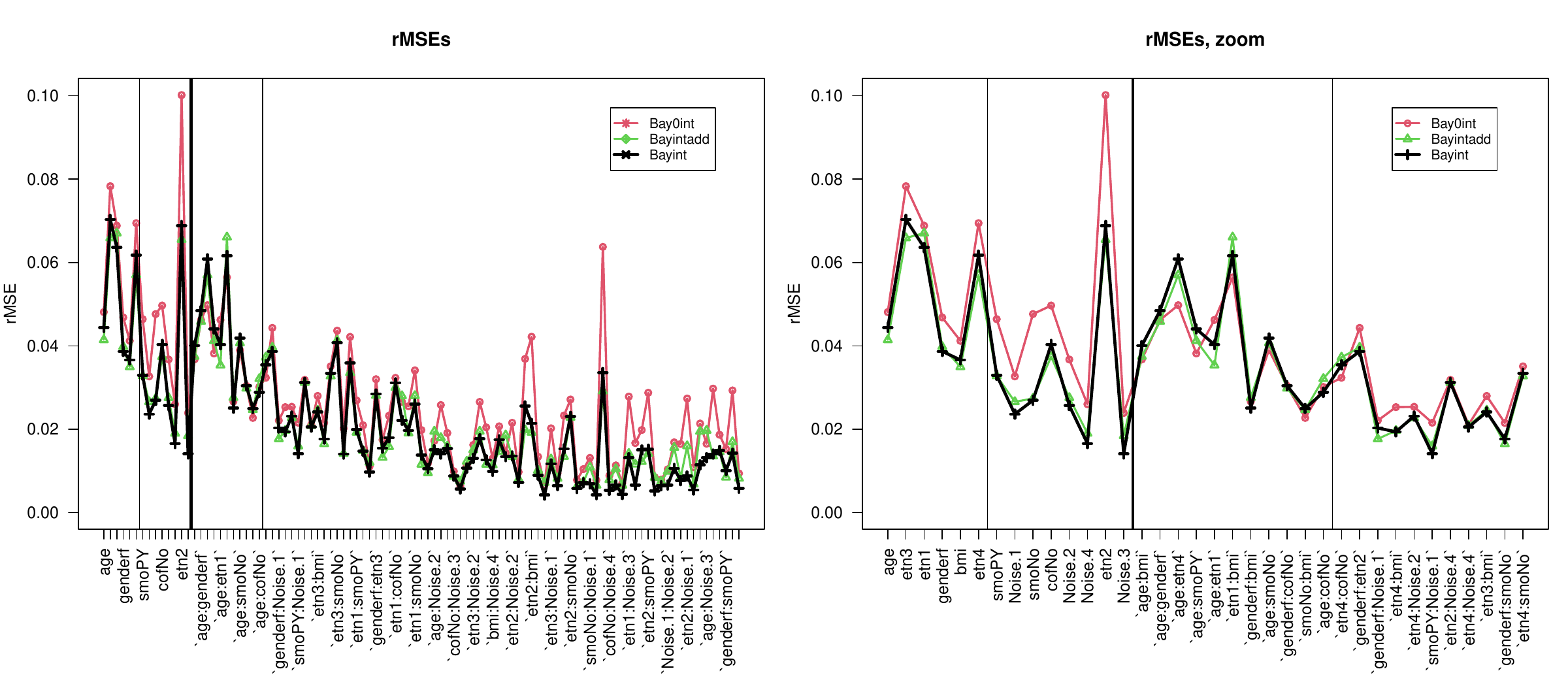}
\end{center}
\caption{Comparing \texttt{Bayint} to \texttt{Bay0int} and \texttt{Bayintadd}. rMSEs for 14 main effects and 85 interactions (before and after bold vertical line), each ordered by significance in master set. Thin vertical line
demarcates effects significant and non-significant effects in the master set (p$<$.01). Right plots zoom in on main effects and most relevant interactions. \textbf{Cholesterol} as outcome.}\label{rmsebayintcompetcholsplit}\end{figure}

\begin{figure}
\begin{center}
\includegraphics[scale=0.5]{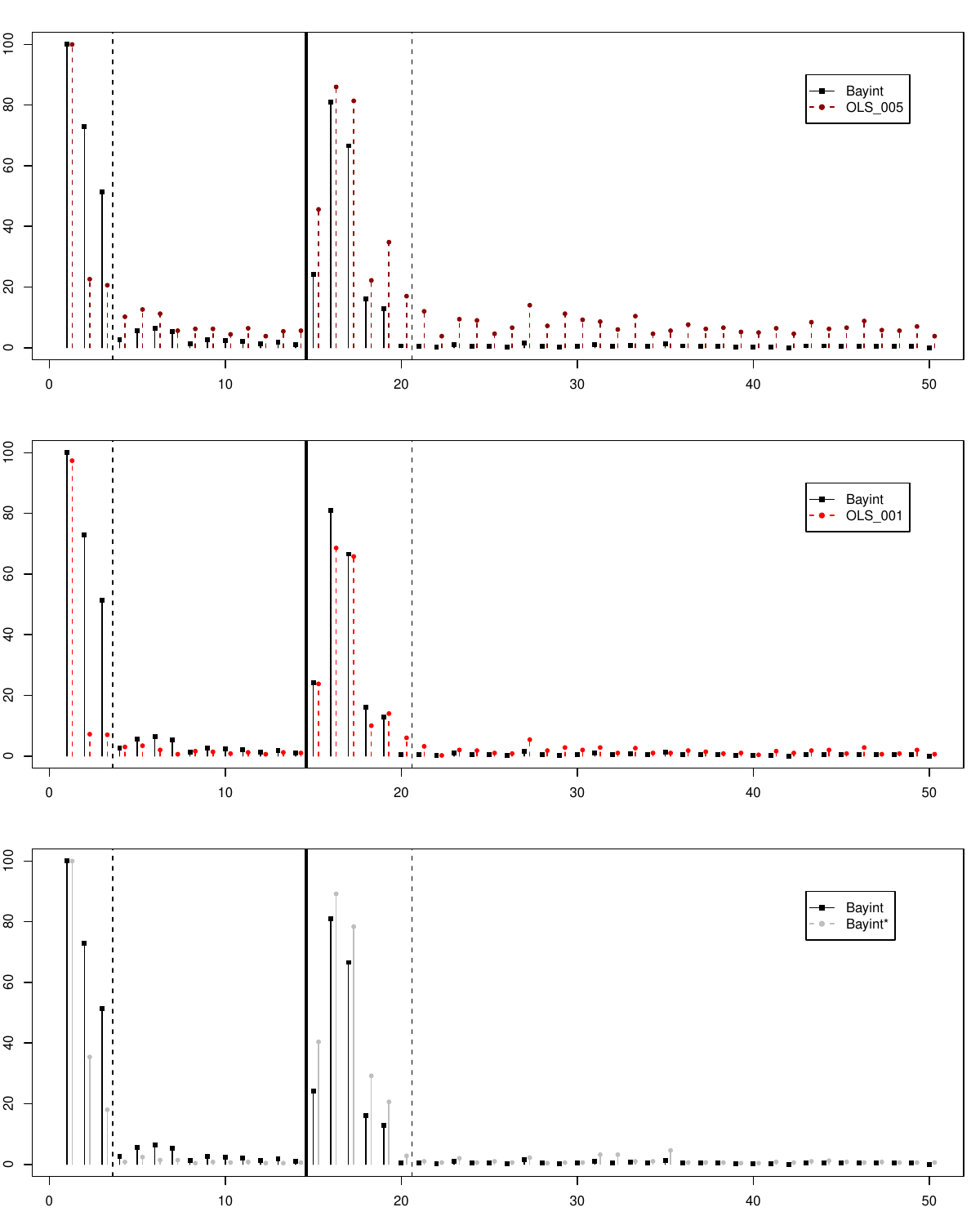}
\end{center}
\caption{Percentage detected effects (out of 500 subsets), with covariates ordered according to absolute effect size in master set. Bold line demarcates main effects and interactions, dashed line separates effect sizes in master set larger and smaller than 0.075. Only first 50 are shown; remainder 49 behave similarly to $25, \ldots, 50$.  \textbf{Cholesterol} as outcome.}\label{detections}
\end{figure}

\begin{figure}
\begin{center}
\includegraphics[scale=0.5]{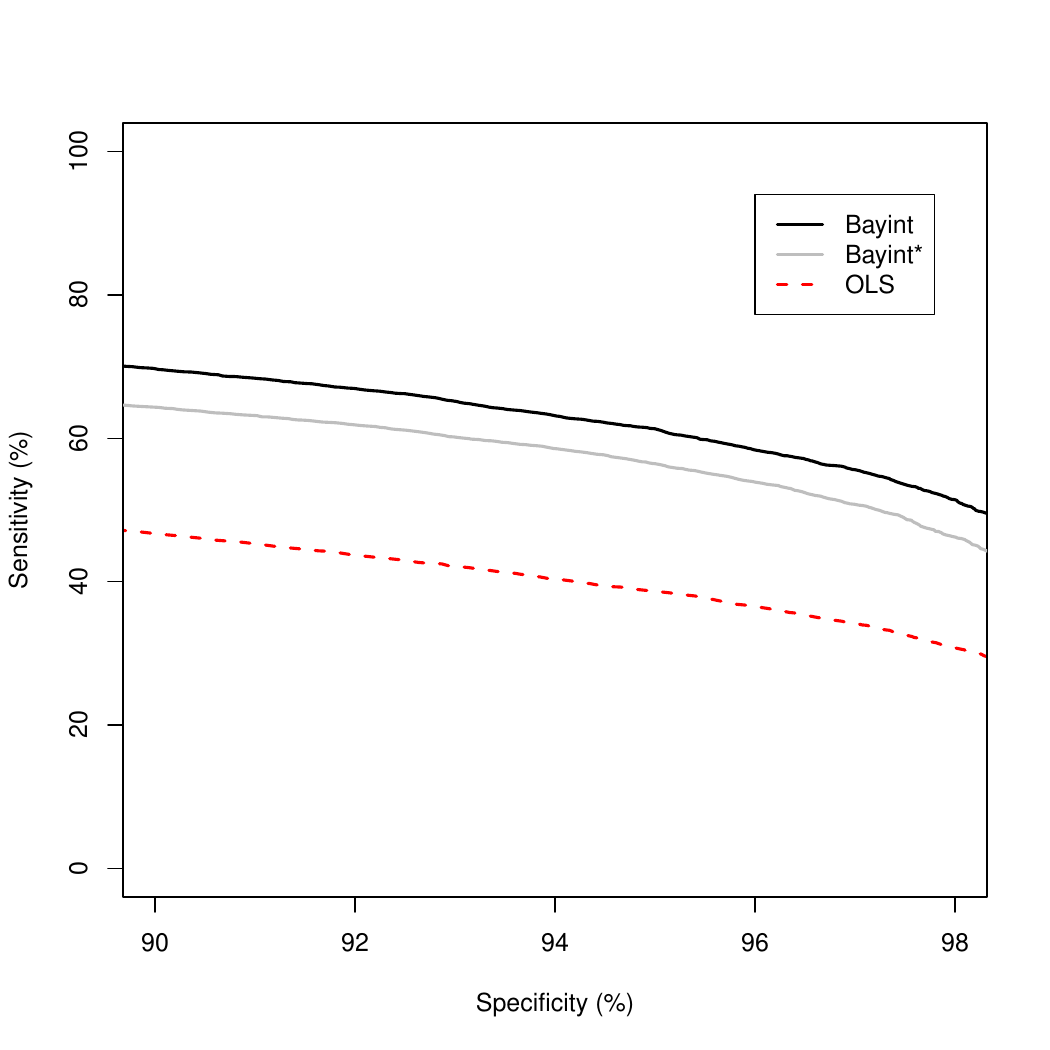}
\end{center}
\caption{Sensitivity versus specificity, averaged over 500 subsets. \textbf{Cholesterol} as outcome.}\label{roc}
\end{figure}

\begin{figure}
\begin{center}
\includegraphics[scale=0.4]{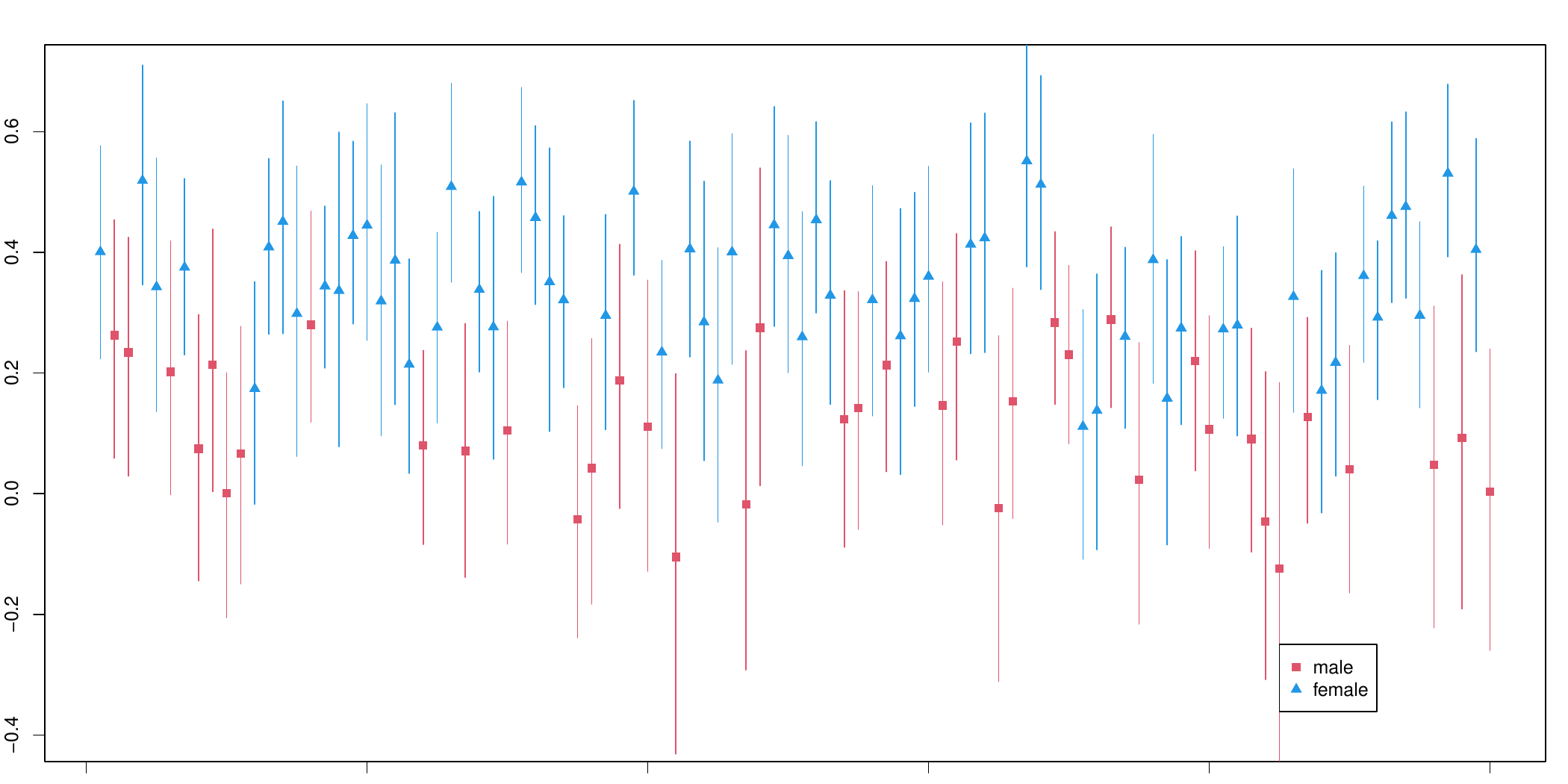}
\end{center}
\caption{Effect of one unit age increase for males and females (100 random test samples, ordered by age) and its uncertainty, accounting for interactions of age with other covariates.  \texttt{Bayint} model fitted on 1,000 training samples. \textbf{Cholesterol} (standardized) as outcome.}\label{deltaage}
\end{figure}

\begin{figure}[h]
\begin{center}
\includegraphics[scale=0.4]{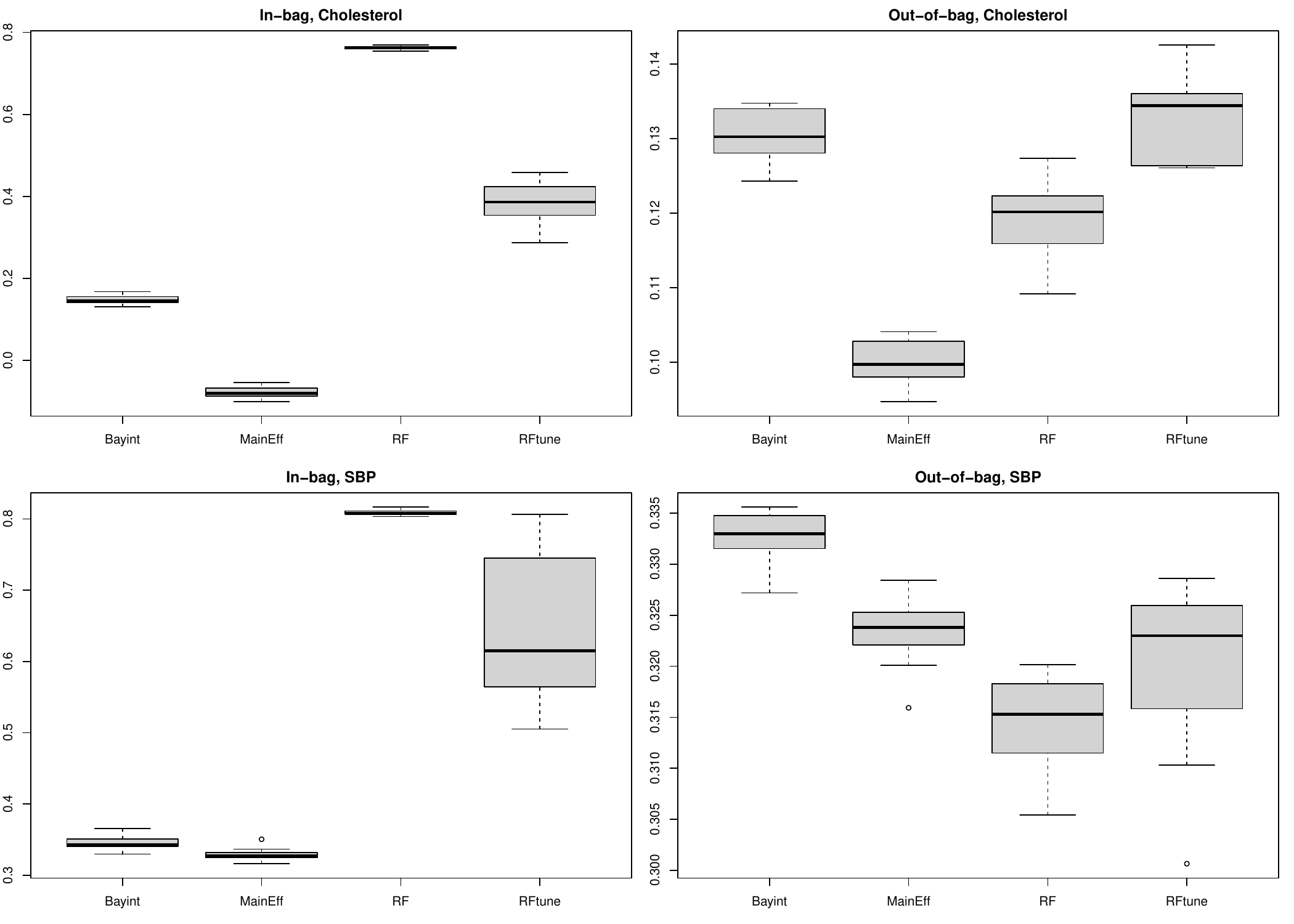}
\end{center}
\caption{In-bag and out-of-bag $R^2$s for 10 training sets of size n=5,000 for cholesterol (top) and SBP (bottom) as outcome. Methods: \texttt{Bayint}: Bayesian linked shrinkage model; \texttt{MainEff}: OLS with main effects only; \texttt{RF} (\texttt{RFtune}): Random Forest with default (tuned) parameters.}\label{r2s5000}
\end{figure}


\begin{figure}[h]
\begin{center}
\includegraphics[scale=0.35]{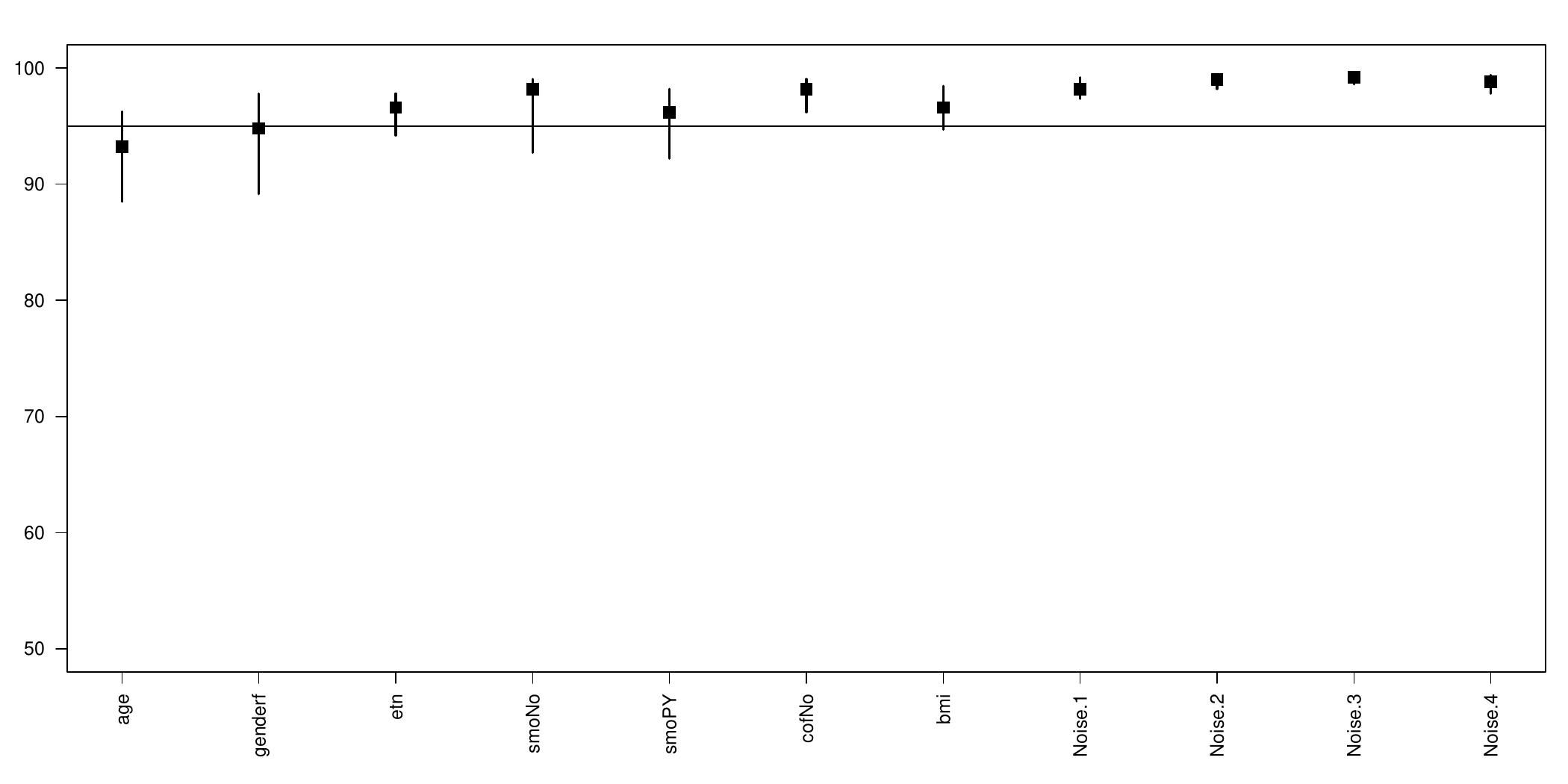}
\end{center}
\caption{Coverages of 95\% credible intervals for Shapley values. Estimated Shapley's and intervals are obtained from 500 training sets.
True Shapley values are based on parameter estimates from the master set. All Shapleys's are computed for 200 random test individuals. For the test set median, first and third
quartile coverage are shown.  \textbf{Cholesterol} as outcome.}\label{coverchol}
\end{figure}

\begin{figure}[h]
\begin{center}
\includegraphics[scale=0.35]{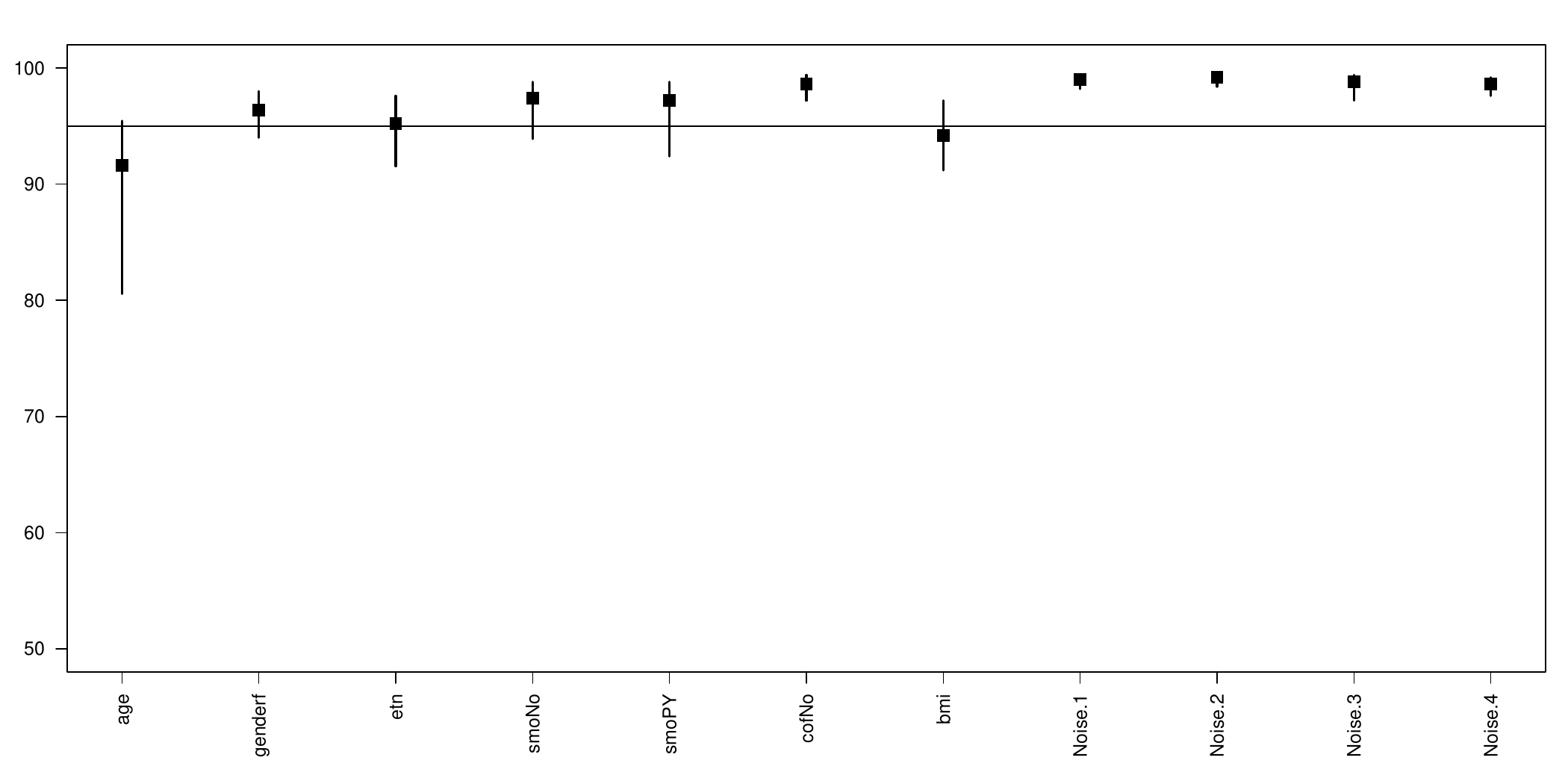}
\end{center}
\caption{Coverages of 95\% credible intervals for Shapley values. Estimated Shapley's and intervals are obtained from 500 training sets.
True Shapley values are based on parameter estimates from the master set. All Shapleys's are computed for 200 random test individuals. For the test set median, first and third
quartile coverage are shown. \textbf{Sbp} as outcome.}\label{coversbp}
\end{figure}

\begin{figure}
\begin{center}
\includegraphics[scale=0.45]{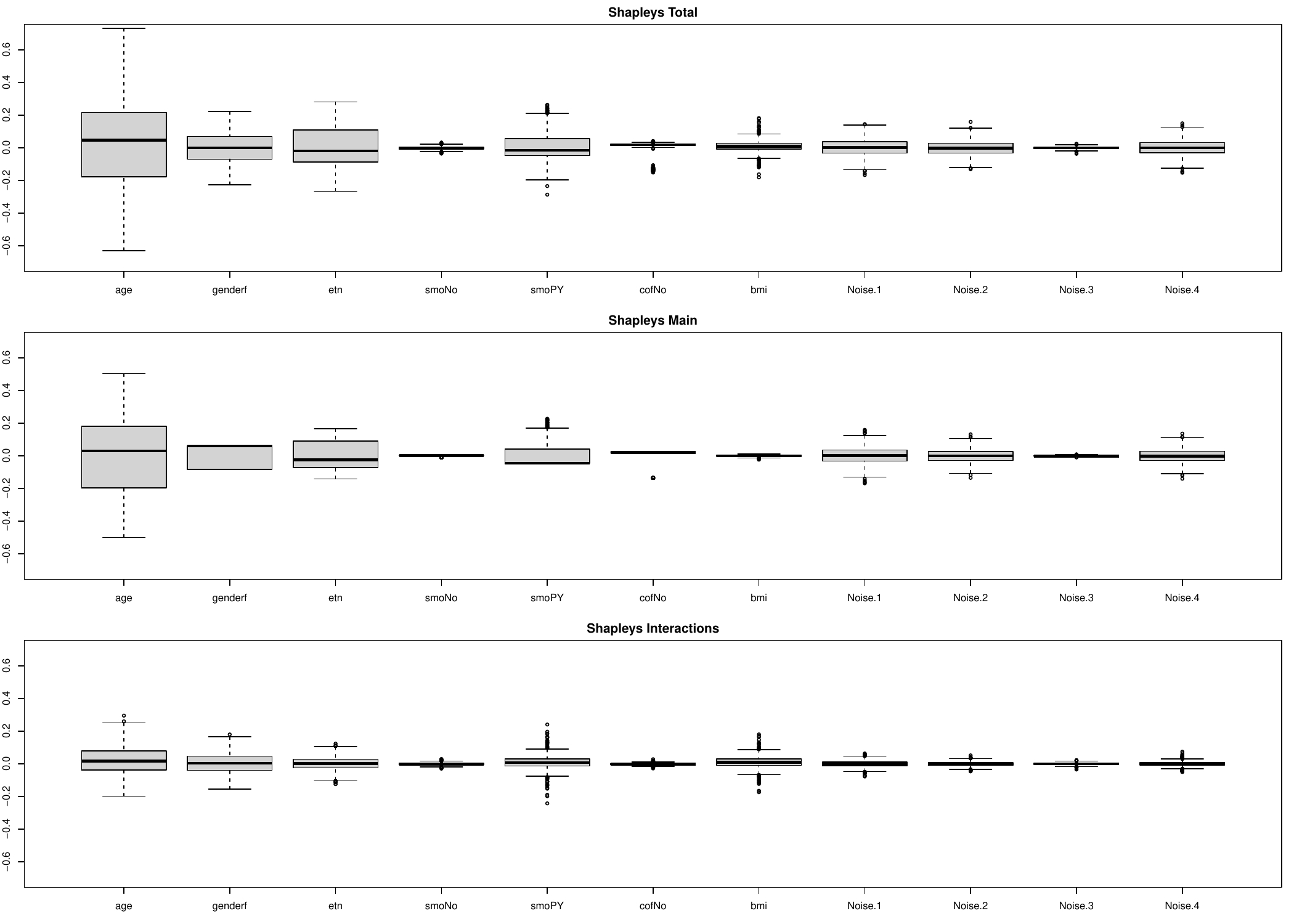}
\end{center}
\caption{Distribution of Shapley values of all covariates over 1,000 random test individuals. \textbf{Cholesterol} (standardized) as outcome.}\label{shapleyall}
\end{figure}

\begin{figure}
\begin{center}
\includegraphics[scale=0.44]{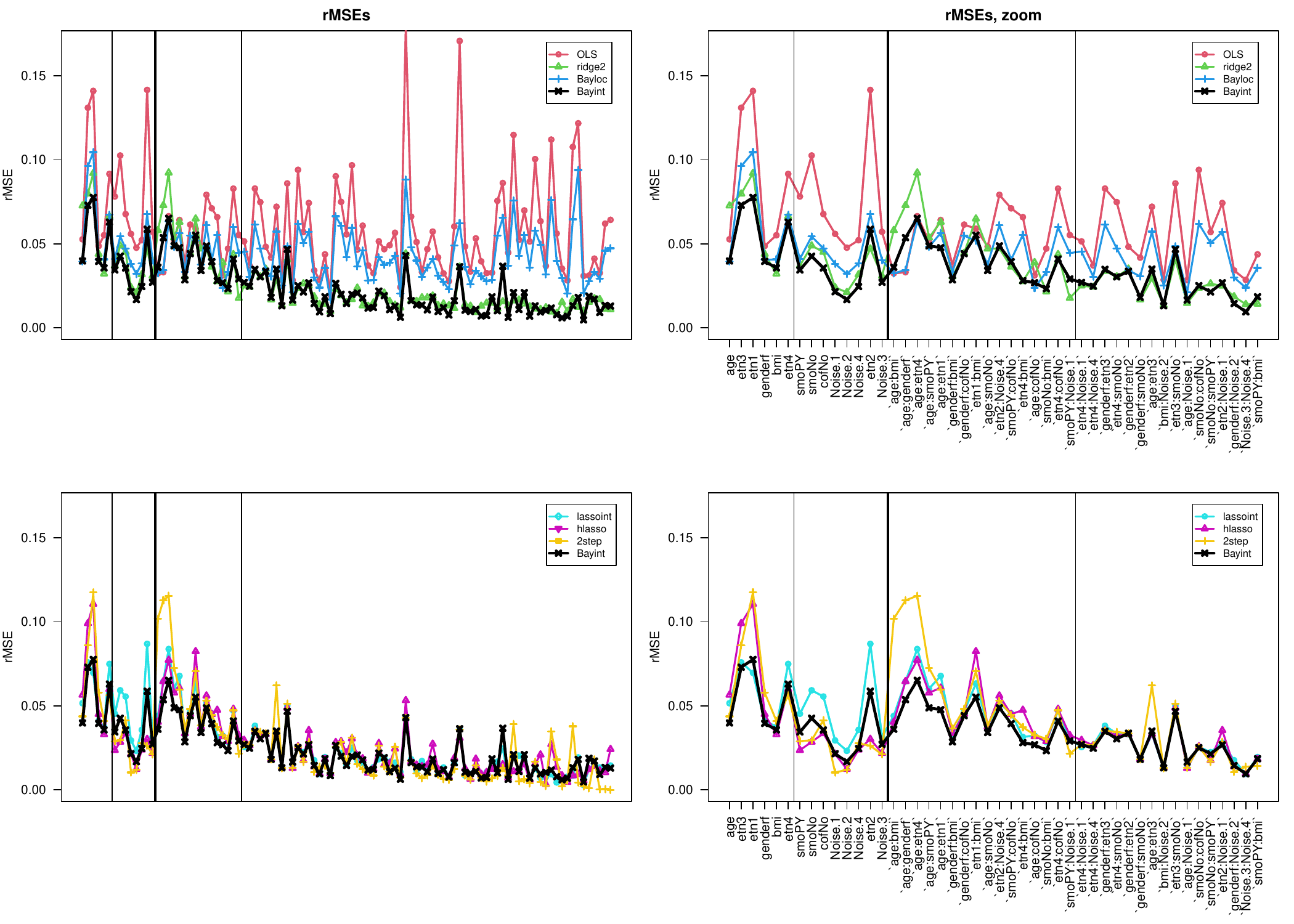}
\end{center}
\caption{rMSEs computed from \emph{synthetic} data set for 14 main effects and 85 interactions (before and after bold vertical line), each ordered by significance in synthetic master set. Thin vertical line demarcates effects significant and non-significant effects in the synthetic master set (p$<$.01). Upper and lower plots compare \texttt{Bayint} with methods that do not and do target selection, respectively. Right plots zoom in on main effects and most relevant interactions. \textbf{Cholesterol} as outcome.}\label{rmsecholsplit_synth}
\end{figure}

\begin{figure}
\begin{center}
\includegraphics[scale=0.44]{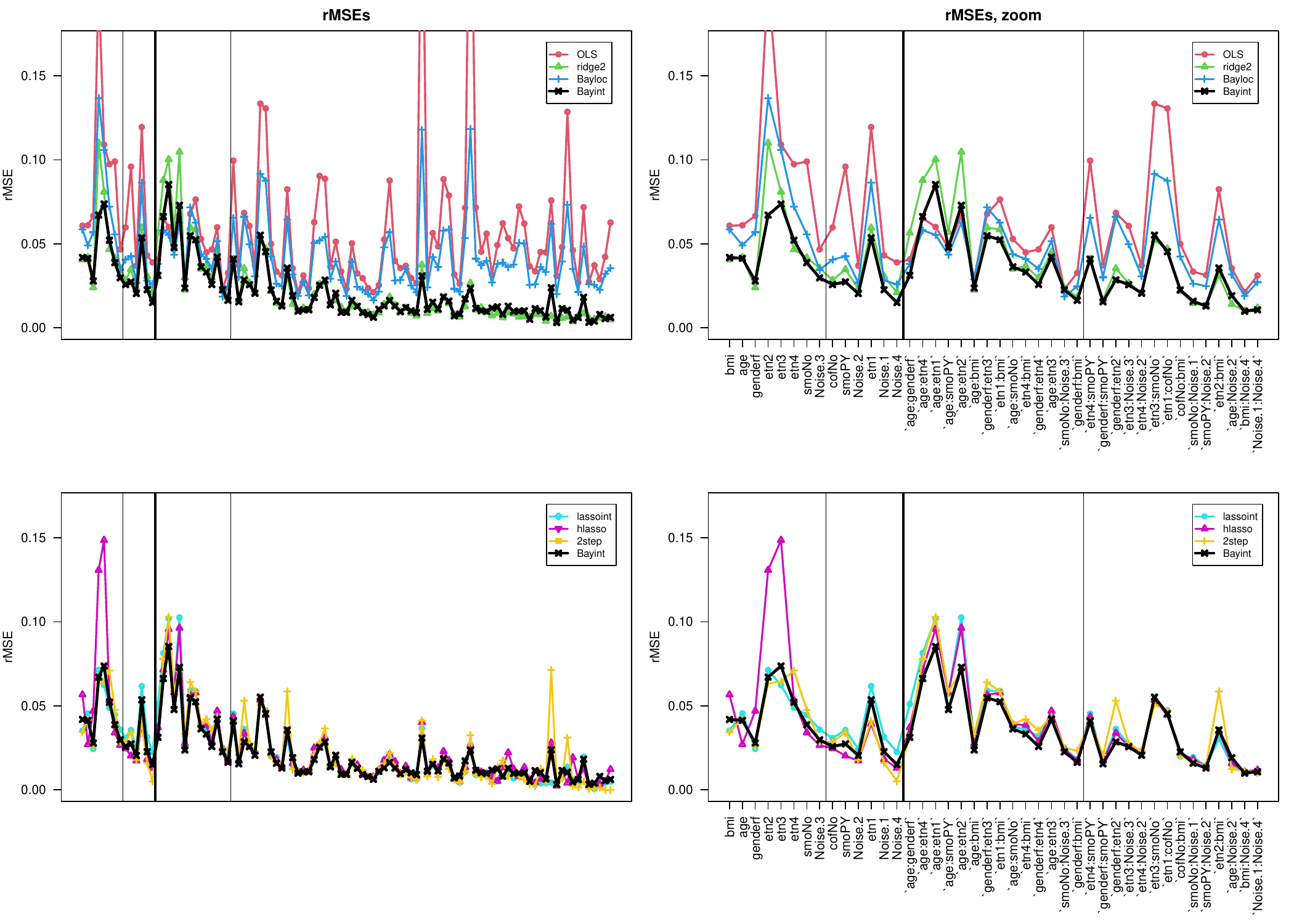}
\end{center}
\caption{rMSEs computed from \emph{synthetic} data set for 14 main effects and 85 interactions (before and after bold vertical line), each ordered by significance in synthetic master set. Thin vertical line demarcates effects significant and non-significant effects in the synthetic master set (p$<$.01). Upper and lower plots compare \texttt{Bayint} with methods that do not and do target selection, respectively. Right plots zoom in on main effects and most relevant interactions. \textbf{SBP} as outcome.}\label{rmsesbpsplit_synth}
\end{figure}

\end{document}